\titleformat{\subsection}[runin]
       {\normalfont\bfseries}
       {\thesubsection}
       {0.5em}
       {}
       [.]
\newtheorem{theorem}{Theorem}[section]
\newtheorem{proposition}[theorem]{Proposition}
\newtheorem{lemma}[theorem]{Lemma}
\newtheorem{observation}[theorem]{Observation}
\theoremstyle{definition}
\newtheorem{definition}[theorem]{Definition}
\newtheorem{example}[theorem]{Example}
\newcommand{\Z}{\mathbb{Z}}
\newcommand{\N}{\mathbb{N}}
\newcommand{\Q}{\mathbb{Q}}
\newcommand{\R}{\mathbb{R}}
\newcommand{\lcm}{\operatorname{lcm}}
\newcommand{\mG}{\mathcal{G}}
\newcommand{\mY}{\mathcal{Y}}
\newcommand{\mI}{\mathcal{I}}
\newcommand{\mJ}{\mathcal{J}}
\newcommand{\mH}{\mathcal{H}}
\newcommand{\mA}{\mathcal{A}}
\newcommand{\mM}{\mathcal{M}}
\newcommand{\mX}{\mathcal{X}}
\newcommand{\ba}{\boldsymbol{a}}
\newcommand{\bb}{\boldsymbol{b}}
\newcommand{\bc}{\boldsymbol{c}}
\newcommand{\bm}{\boldsymbol{m}}
\newcommand{\bzer}{\boldsymbol{0}}
\newcommand{\gen}[1]{\langle {#1} \rangle}
\newcommand{\bff}{\boldsymbol{f}}
\newcommand{\bg}{\boldsymbol{g}}
\newcounter{DecideCounter}
\begin{document}

\title{Linear equations with monomial constraints and decision problems in abelian-by-cyclic groups}
\author{Ruiwen Dong\footnote{Department of Mathematics, Saarland University. Email: ruiwen.dong@uni-saarland.de}}

\date{}
\maketitle
\thispagestyle{empty}

\begin{abstract}
    We show that it is undecidable whether a system of linear equations over the Laurent polynomial ring $\mathbb{Z}[X^{\pm}]$ admit solutions where a specified subset of variables take value in the set of monomials $\{X^z \mid z \in \mathbb{Z}\}$.
    In particular, we construct a finitely presented $\mathbb{Z}[X^{\pm}]$-module, where it is undecidable whether a linear equation $X^{z_1} \boldsymbol{f}_1 + \cdots + X^{z_n} \boldsymbol{f}_n = \boldsymbol{f}_0$ has solutions $z_1, \ldots, z_n \in \mathbb{Z}$.
    This contrasts the decidability of the case $n = 1$, which can be deduced from Noskov's Lemma.

    We apply this result to settle a number of problems in computational group theory. 
    We show that it is undecidable whether a system of equations has solutions in the wreath product $\mathbb{Z} \wr \mathbb{Z}$, providing a negative answer to an open problem of Kharlampovich, L\'{o}pez and Miasnikov (2020).
    We show that there exists a finitely generated abelian-by-cyclic group in which the problem of solving a single (spherical) quadratic equation is undecidable, answering an open problem of Lysenok and Ushakov (2021).
    We also construct a finitely generated abelian-by-cyclic group, different to that of Mishchenko and Treier (2017), in which the Knapsack Problem is undecidable.
    In contrast, we show that the problem of Coset Intersection is decidable in all finitely generated abelian-by-cyclic groups.
\end{abstract}

\vspace{0.5cm}

\noindent
\textbf{Keywords:} linear equation over modules, Laurent polynomials, abelian-by-cyclic groups, Knapsack Problem, equation over groups, Coset Intersection

\vspace{0.5cm}
\noindent

\newpage
\setcounter{page}{1}

\section{Introduction and main results}
\subsection*{Linear equations with constraints}

In the first part of this paper we consider linear equations over the Laurent polynomial ring $\Z[X^{\pm}] = \Z[X, X^{-1}]$ with monomial constraints on certain variables.
Solving linear equations with additional constraints is widely studied in a much larger context.
Fix a commutative ring $R$ (such as $\Z, \Q, \Z[X]$ or $\Z[X^{\pm}]$) and a subset $S$ of $R$ (such as $\N, \Q_{>0}, \N[X]$ or $\{X^z \mid z \in \Z\}$).
Given $m \geq n \geq 1$, as well as elements $a_{ij} \in R, i = 0, 1, \ldots, m; j = 1, \ldots, d$, a central problem is to decide whether the system of linear equations
\begin{align}\label{eq:lin}
    a_{11} x_1 + \cdots + a_{n1} x_n + a_{n+1,1} x_{n+1} + \cdots + a_{m1} x_m & = a_{01}, \nonumber \\
    a_{12} x_1 + \cdots + a_{n2} x_n + a_{n+1, 2} x_{n+1} + \cdots + a_{m2} x_m & = a_{02}, \nonumber \\
    & \vdots \\
    a_{1d} x_1 + \cdots + a_{nd} x_n + a_{n+1, d} x_{n+1} + \cdots + a_{md} x_m & = a_{0d}, \nonumber
\end{align}
have solutions $x_1, \ldots, x_n \in S, x_{n+1}, \ldots, x_m \in R$.

For example, if one takes $R = \Z$ and its subset $S = \N$, then the problem of solving the system~\eqref{eq:lin} is equivalent to Integer Programming, and is thus NP-complete.
If we take $R = \Z$ and its subset $S = 2^{\N} \coloneqq \{2^n \mid n \in \N\}$, then solving the system~\eqref{eq:lin} can be formulated as deciding a fragment of the existential theory of \emph{Presburger arithmetic with power predicate}, which is decidable by a classic result of Semenov~\cite{semenov1980certain}.
It has since developed numerous applications and connections to automata theory~\cite{DBLP:conf/icalp/Benedikt0M23}.

In this paper, we focus on the case where $R$ is the ring $\Z[X^{\pm}]$ of univariate Laurent polynomials over integers. 
Most results on algorithmic problems can be applied interchangeably on the usual polynomial ring $\Z[X]$ and the Laurent polynomial ring $\Z[X^{\pm}]$.
For the purpose of subsequent applications on computational group theory (see below), we choose to state our results over the Laurent polynomial ring $\Z[X^{\pm}]$, although they can be easily adapted to $\Z[X]$.
All polynomial rings appearing in this paper will thus be Laurent polynomial rings.

Solving the system~\eqref{eq:lin} for $R = S = \Z[X^{\pm}]$ is a central topic in computational commutative algebra.
Numerous effective methods, such as the \emph{Gr\"{o}bner basis} (for modules)~\cite{eisenbud2013commutative, schreyer1980berechnung}, have been developed for deciding the existence of solutions for $R = S = \Z[X^{\pm}]$.
These algorithms also allows one to perform \emph{variable elimination}, and hence decide the existence of solutions for $R = \Z[X^{\pm}], S = \Z$.
However, adding positivity constraints for $R = \Z[X^{\pm}]$ can yield undecidability results.
Narendran~\cite{narendran1996solving} showed that it is undecidable whether the system~\eqref{eq:lin} has solutions for $R = \Z[X^{\pm}], S = \N[X^{\pm}]$.

In this paper we consider the problem of solving the system~\eqref{eq:lin} with ``monomial constraints'', that is, deciding the existence of solutions for $R = \Z[X^{\pm}]$ and $S = X^{\Z} \coloneqq \{X^z \mid z \in \Z\}$.
We prove this to be undecidable.
To express this problem more concisely, define $\bff_i = (a_{i1}, \ldots, a_{id}) \in \Z[X^{\pm}]^d, i = 0, 1, \ldots, m$, and denote by $Q$ the $\Z[X^{\pm}]$-module generated by $\bff_{n+1}, \ldots, \bff_{m}$. Then solving the system~\eqref{eq:lin} for $R = \Z[X^{\pm}], S = X^{\Z},$ is equivalent to finding solutions $z_1, \ldots, z_n \in \Z$ for the equation $X^{z_1} \bff_1 + \cdots + X^{z_n} \bff_n = \bff_0$ in the quotient module $\Z[X^{\pm}]^d/Q$.

\begin{theorem}\label{thm:linequndec}
    There exist an integer $n \in \N$, a finitely presented $\Z[X^{\pm}]$-module $\mA = \Z[X^{\pm}]^d/Q$, and elements $\bff_1, \ldots, \bff_n \in \mA$, such that the following problem is undecidable.

    \smallskip
    \noindent \textbf{Input:} an element $\bff_0 \in \mA$.
    
    \noindent \textbf{Question:} whether there exist $z_1, \ldots, z_n \in \Z$ such that
$
    X^{z_1} \bff_1 + \cdots + X^{z_n} \bff_n = \bff_0
$.
\end{theorem}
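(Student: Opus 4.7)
The plan is to reduce from a known undecidable problem. The most natural candidates are the halting problem for a universal Minsky (two-counter) machine, Hilbert's Tenth Problem, and Narendran's theorem on polynomial positivity constraints quoted earlier in the excerpt. Because the statement requires that $\mA$ and the tuple $(\bff_1, \ldots, \bff_n)$ be fixed once and for all while only $\bff_0$ varies, a reduction from the halting problem of a \emph{universal} machine (with the input encoded into $\bff_0$) is the most natural fit. The monomial ``variables'' $X^{z_1}, \ldots, X^{z_n}$ will play the role of nondeterministic integer parameters describing a purported computation of that machine.

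Concretely, I would fix a universal Minsky machine $M$ and encode its configurations $(q, c_1, c_2)$ as elements of the quotient $\mA = \Z[X^{\pm}]^d/Q$: the finite state $q$ is absorbed into the $d$ coordinates, and the two counters $c_1, c_2$ are recorded as $X$-exponents on appropriately chosen coordinates. The submodule $Q$ is engineered so that each one-step transition of $M$ becomes a $\Z[X^{\pm}]$-linear identity in $\mA$. The fixed elements $\bff_1, \ldots, \bff_n$ then serve as ``building blocks''---one per kind of transition, say---and $X^{z_i}\bff_i$ represents that block applied with shift~$z_i$, i.e.\ contributing $z_i$ applications in a candidate trajectory. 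The target $\bff_0$ would encode the pair (initial configuration with input $w$, accepting configuration), arranged so that $\sum_i X^{z_i}\bff_i = \bff_0$ holds in $\mA$ for some $(z_i) \in \Z^n$ if and only if $M$ halts on input~$w$.

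The main obstacle is the faithful encoding of the zero-test instruction: a $\Z[X^{\pm}]$-module offers no native way to detect that an exponent equals zero, and naive encodings admit spurious solutions in which the zero-test fires while the counter is actually nonzero. Overcoming this typically requires a dedicated gadget---for example, a pair of redundant representations of each counter together with a consistency relation in $Q$ that holds only when the counter is genuinely zero, or an encoding that forces one coordinate to vanish exactly on the zero locus of another. A secondary difficulty is that the exponents $z_i$ range over all of $\Z$, so nothing prevents ``negative computation time''; this can be tamed either by representing $|z_i|$ via an additional pair $\bff_{i'}, \bff_{i''}$ and a sign-selector relation, or by shaping $Q$ so that only one sign of $z_i$ yields a self-consistent configuration along the trajectory.

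Once these encoding issues are settled, correctness splits into the usual two directions: any halting run of $M$ on $w$ is directly translated into an explicit tuple $(z_i)$ solving the equation (completeness), and, conversely, any tuple $(z_i)$ realizing $\sum_i X^{z_i} \bff_i = \bff_0$ decodes into a valid halting trajectory thanks to the gadget relations built into $Q$ (soundness). Since $M$, the integer $d$, and the generating set of $Q$ are all fixed, the module $\mA$ and the elements $\bff_1, \ldots, \bff_n$ are fixed as required, and the undecidability of halting transfers to the membership problem in the statement.
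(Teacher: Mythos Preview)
Your plan is not a proof: you correctly identify the zero-test as the crux of any Minsky-machine encoding and then do not supply one. The sentences ``this typically requires a dedicated gadget'' and ``once these encoding issues are settled'' are precisely where the argument would have to be. There is also a conceptual slip in the sketch itself: $X^{z_i}\bff_i$ is a single \emph{shifted} copy of $\bff_i$, not ``$z_i$ applications'' of a transition. With $n$ fixed you have only $n$ integer parameters, whereas a Minsky run has unbounded length and is not determined by the multiset of transitions used (order matters for the zero-tests), so the encoding as described does not represent a trajectory.

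The paper takes the route you listed but set aside: Hilbert's Tenth Problem, in its universal-polynomial form $P_{\mathrm{uni}}(x_1,\dots,x_d)=a$ where only the target $a$ varies---which is exactly what is needed for only $\bff_0$ to be the input. The device that makes this work is elementary: $(X-1)^k \mid f$ in $\Z[X^{\pm}]$ is equivalent to $f(1)=f'(1)=\cdots=f^{(k-1)}(1)=0$, and the derivatives of $X^{z}$ at $X=1$ are polynomials in $z$. Concretely $(X-1)^2 \mid X^{z_1}+X^{z_2}-X^{z_3}-1$ iff $z_3=z_1+z_2$, and a short system involving $(X-1)^3$ encodes squaring and hence $z_3=z_1 z_2$ via the identity $2z_1z_2=(z_1+z_2)^2-z_1^2-z_2^2$. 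The module $\mA$ is then just a product $\prod_j \Z[X^{\pm}]/(p_j)$ with each $p_j\in\{0,(X-1)^2,(X-1)^3\}$, one coordinate per atomic equation in a rewriting of $P_{\mathrm{uni}}=a$ into additions, multiplications, and constant assignments. All the ``trajectory compression'' you were worried about has already been done by Matiyasevich; the paper only needs to simulate integer arithmetic on the exponents, and the derivative trick does that in two lines.
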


The proof of Theorem~\ref{thm:linequndec} will be given in Section~\ref{sec:lineq}. 
Our undecidability result does not depend on working over $\Z[X^{\pm}]$. Indeed, the reader can check that
the same argument shows the undecidability for finitely presented $\Q[X^{\pm}]$-modules and $\R[X^{\pm}]$-modules.
Our proof is based on an embedding of the Hilbert's tenth problem over the variables $z_1, \ldots, z_n$, and requires large $n$.
On the contrary, the case $n = 1$ is decidable.
This can be deduced from \emph{Noskov's Lemma} in commutative algebra:

\begin{restatable}[Corollary of Noskov's lemma~\cite{noskov1982conjugacy}]{theorem}{thmonemono}\label{thm:onemono}
    Given a finitely presented $\Z[X^{\pm}]$-module $\mA$ as well as two elements $\bff_0, \bff_1 \in \mA$, it is decidable whether $X^z \bff_1 = \bff_0$ has solution $z \in \Z$.
\end{restatable}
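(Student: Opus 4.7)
I would derive this directly from Noskov's lemma~\cite{noskov1982conjugacy} in its module-theoretic form: for an effectively Noetherian commutative ring $R$, a finitely presented $R$-module $M$, an element $r \in R$, and $m, m_0 \in M$, one can decide whether there exists $n \in \N$ with $r^n m = m_0$. To invoke this in our setting, I would first verify that $\Z[X^{\pm}]$ satisfies the effectivity hypotheses (effective Noetherianity together with decidable submodule membership in finitely presented modules); both follow from standard Gr\"obner basis techniques over $\Z[X^{\pm}]$, as in~\cite{eisenbud2013commutative, schreyer1980berechnung}.

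The next step is to split the problem according to the sign of $z$. The case $z \geq 0$ is an immediate application of Noskov's lemma with $r = X$, $M = \mA$, $m = \bff_1$, $m_0 = \bff_0$. The case $z \leq 0$ is handled by a symmetric application with $r = X^{-1}$, which is a bona fide element of $\Z[X^{\pm}]$ because $X$ is a unit there. A disjunction of the two decisions settles the theorem.

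The principal content, of course, lies inside Noskov's lemma itself rather than in the reduction. Conceptually, the key phenomenon is that the ascending chain of $\Z[X^{\pm}]$-submodules
\[
U_n \;:=\; \sum_{i=0}^{n} \Z[X^{\pm}] \cdot X^i \bff_1 \;\subseteq\; \mA
\]
must stabilize at some index $N$ by Noetherianity, and this $N$ can be detected effectively by iterative submodule membership tests for $X^{n+1}\bff_1 \in U_n$. Once stabilization is reached, the orbit $(X^n \bff_1)_{n \geq 0}$ obeys a computable linear recurrence over $\Z[X^{\pm}]$, which reduces the existence question $X^n \bff_1 = \bff_0$ to finitely many submodule membership queries in $\mA$. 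I would cite this machinery as a black box, the main obstacle for a fully self-contained treatment being to control the dynamics of the sequence $(X^n \bff_1)$ beyond the stabilization index $N$.
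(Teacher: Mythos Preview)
There is a genuine gap: the ``module-theoretic form'' of Noskov's lemma you invoke is not what Noskov's lemma says. Noskov's lemma asserts that, given a finitely generated commutative ring $R$ and a finite set $S \subseteq U(R)$ of units, one can compute a finite presentation of the multiplicative group $\langle S \rangle$. It is a statement about units in rings, not about orbits in modules. Your proposed black box --- deciding $r^n m = m_0$ in a finitely presented module over an effectively Noetherian ring --- is essentially the theorem to be proved (specialized to $R = \Z[X^{\pm}]$, $r = X$), so citing it as ``Noskov's lemma'' is circular.

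Your sketch of why the black box should hold also breaks down in this setting. Since $X$ is already a unit of $\Z[X^{\pm}]$, each summand $\Z[X^{\pm}] \cdot X^i \bff_1$ equals $\Z[X^{\pm}] \cdot \bff_1$, so your chain $U_n$ is constant from $n=0$; stabilization is immediate and carries no information. The ``linear recurrence over $\Z[X^{\pm}]$'' you obtain is just $X^{n+1}\bff_1 = X \cdot (X^n \bff_1)$, which does not reduce the question $X^n \bff_1 = \bff_0$ to finitely many membership tests. (If instead you meant $\Z$-submodules $\sum_{i \le n} \Z \cdot X^i \bff_1$, these need not stabilize at all, since $\mA$ is typically not finitely generated as an abelian group --- take $\mA = \Z[X^{\pm}]$.) The paper's deduction proceeds differently: compute the annihilator $\mI = \{f : f\bff_1 = \bzer\}$ and an $h_0$ with $h_0 \bff_1 = \bff_0$, then pass to the ring $\Z[X^{\pm}, Y^{\pm}]/\langle \mI, Y - h_0 \rangle$, in which both $X$ and $Y$ are units; now $X^z \bff_1 = \bff_0$ is equivalent to $X^z = Y$ in this quotient, and \emph{this} is exactly the kind of question Noskov's lemma (in its actual form) answers. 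The introduction of the auxiliary unit $Y$ is the missing idea.
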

We will give a short deduction of Theorem~\ref{thm:onemono} in Appendix~\ref{app:proof}.

Our motivation behind studying linear equations over $\Z[X^{\pm}]$ with monomial constraint comes from a series of algorithmic problems in infinite groups.
In the second part of this paper, we focus on four decision problems: \emph{word equations}, \emph{quadratic word equations}, the \emph{Knapsack Problem} and \emph{Coset Intersection}, in the class of \emph{abelian-by-cyclic} groups.

\subsection*{Abelian-by-cyclic groups}
A group is called \emph{abelian-by-cyclic} if it admits an abelian normal subgroup $A$ such that the quotient group $G/A$ is isomorphic to $\Z$.
See Section~\ref{sec:prelim} for a structural description of abelian-by-cyclic groups.
Many well-studied groups, such as the \emph{wreath product} $\Z \wr \Z$ and the \emph{Baumslag-Solitar} group $\mathsf{BS}(1, p), p \geq 2$, fall into this category.
The groups $\Z \wr \Z$ and $\mathsf{BS}(1, p)$ can be intuitively understood as groups of $2 \times 2$ matrices over the ring of Laurent polynomials $\Z[X^{\pm}]$ and over the ring $\Z[1/p] = \{\frac{a}{p^n} \mid a \in \Z, n \in \N\}$:
\begin{align}
\Z \wr \Z & \cong \left\{ 
\begin{pmatrix}
        X^{b} & f \\
        0 & 1
\end{pmatrix}
\;\middle|\; f \in \Z[X^{\pm}], b \in \Z 
\right\}, \label{eq:defwr}
\\
\mathsf{BS}(1, p) & \cong \left\{ 
\begin{pmatrix}
        p^{b} & f \\
        0 & 1
\end{pmatrix}
\;\middle|\; f \in \Z[1/p], b \in \Z 
\right\}. \label{eq:defbs}
\end{align}
The element 
$
\begin{pmatrix}
        X^{b} & f \\
        0 & 1
\end{pmatrix}
$
(respectively
$
\begin{pmatrix}
        p^{b} & f \\
        0 & 1
\end{pmatrix}
$)
can be thought of as a Turing machine configuration whose tape cells contain letters in $\Z$ (resp.\ $\{0, 1, \ldots, p-1\}$), which correspond to the coefficients of the polynomial $f$ (resp.\ the base-$p$ expansion of $f$), while the head of the machine is positioned at the cell $b$.
Multiplication in the group corresponds to operating the machine by moving the head and adding elements of $\Z$ (resp.\ $\{0, 1, \ldots, p-1\}$, with carrying) to the cell. See~\cite{DBLP:conf/icalp/CadilhacCZ20} and~\cite{lohrey2015rational} for a complete description.
This analogy between group operations and Turing machines is shared among all abelian-by-cyclic groups, as the isomorphism $G/A \cong \Z$ gives the Turing machine-like structure, with $\Z$ representing the indices of the tape.

Abelian-by-cyclic groups have been extensively studied from the point of view of geometry and growth~\cite{farb2000asymptotic, hurtado2021global}, algorithmic problems~\cite{boler1976conjugacy}, and random walks~\cite{pittet2003random}.
Our motivation to study algorithmic problems in abelian-by-cyclic is two-fold.
On one hand, they are the simplest non-abelian groups, and thus one of simplest classes of groups where a number of algorithmic problems remain open.
On the other hand, the similarity of abelian-by-cyclic groups to operations on Turing machines might motivate the development of analogous methods in automata theory.

\subsection*{Equations over groups and quadratic equations}
The study of word equations over groups dates at least as far back as 1911, when Max Dehn proposed the word and conjugacy problems.
A major breakthrough in the study of equations over groups was made by Makanin in the 1980s.
Extending his seminal work on word equations in free semigroups~\cite{makanin1977problem}, Makanin showed that the question of whether a general system of equations over a free group has a solution is decidable~\cite{makanin1983equations}.
See~\cite{roman2012equations, DBLP:conf/icalp/DiekertE17, DBLP:conf/icalp/CiobanuE19, garreta2020diophantine, levine2022equations} for more recent developments.
Solving a system of equations over a groups $G$ can be formulated as the following decision problem.
Let $\mX = \{x_1, \ldots, x_n\}$ be a finite alphabet, define $\mX^{-1}$ as a new alphabet $\{x_1^{-1}, \ldots, x_n^{-1}\}$.
Suppose we are given a finite word $w$ over the alphabet $\mX \cup \mX^{-1} \cup G$ (the group $G$ might be infinite).
For any elements $g_1, \ldots, g_n \in G$, we can substitute each $x_i$ by $g_i$ (and $x_i^{-1}$ by $g_i^{-1}$) in $w$, and obtain a word over $G$.
We denote by $w(g_1, \ldots, g_n) \in G$ the product of this word.
The problem of solving a system of equations over a group $G$ can then be formulated as follows:

\smallskip
    \noindent \textbf{Input:} finite words $w_1, \ldots, w_t$ over the alphabet $\mX \cup \mX^{-1} \cup G$.
    
    \noindent \textbf{Question:} whether there exist $g_1, \ldots, g_n \in G$, such that
\[
    w_1(g_1, \ldots, g_n) = \cdots = w_t(g_1, \ldots, g_n) = e.
\]

Here, $e$ denotes the neutral element of $G$.
For example, the \emph{conjugacy problem}, which asks for given $h, h' \in G$, whether there exists $g \in G$ such that $ghg^{-1} = h'$, can be considered as a special case of solving an equation with $\mX = \{x_1\}, w_1 = x_1 h x_1^{-1} h'^{-1}$.

One class of equations over groups that generated much interest is the class of \emph{quadratic equations}: these consist of only words where for each $i$, the numbers of occurrences of $x_i$ and $x_i^{-1}$ sum up to two (such as the word $w_1 = x_1 h x_1^{-1} h'^{-1}$).
Apart from being a generalization of the conjugacy problem, quadratic equations have also been observed to have tight connections with the theory of compact surfaces~\cite{culler1981using, schupp1980quadratic}.
Recently, Mandel and Ushakov~\cite{mandel2023quadratic} showed that solving systems of quadratic equations is decidable and NP-complete in the Baumslag-Solitar group $\mathsf{BS}(1, p)$, while Ushakov and Weiers~\cite{ushakov2023quadratic} showed that that solving systems of quadratic equations is NP-complete in the wreath product $\Z_2 \wr \Z$ (commonly known as the \emph{lamplighter group}).
Kharlampovich, L\'{o}pez and Miasnikov~\cite{kharlampovich2020diophantine} showed that solving systems of quadratic equations is decidable in the wreath product $A \wr \Z$, where $A$ is any finitely generated abelian group.
They left as an open problem\footnote{In the
published version of Kharlampovich, L\'{o}pez and Miasnikov's paper~\cite{kharlampovich2020diophantine}, it was claimed that solving a general system of equations is \emph{decidable} in $A \wr \Z$, with $A$ finitely generated abelian. The proof unfortunately contained a gap. In the corrected version, the problem was again listed as open.} whether solving a \emph{general} system of equations is decidable in $A \wr \Z$.
In this paper, we provide a negative answer to this open problem:

\begin{restatable}{theorem}{thmsyseq}\label{thm:syseq}
    Solving a system of equations is undecidable in $\Z \wr \Z$.
\end{restatable}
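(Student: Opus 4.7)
The plan is to reduce the problem of Theorem~\ref{thm:linequndec} to solving a system of equations over $\Z \wr \Z$. Fix the module $\mA = \Z[X^{\pm}]^d/Q$ and the elements $\bff_1, \ldots, \bff_n \in \mA$ provided by that theorem, and let $\bg_1, \ldots, \bg_s \in \Z[X^{\pm}]^d$ be a finite generating set of $Q$. Given an input $\bff_0$, the equation $X^{z_1} \bff_1 + \cdots + X^{z_n} \bff_n = \bff_0$ holds in $\mA$ if and only if there exist integers $z_1, \ldots, z_n$ and Laurent polynomials $P_1, \ldots, P_s$ such that, coordinate by coordinate, $\sum_{i} X^{z_i} f_i^{(k)} - \sum_{j} P_j g_j^{(k)} = f_0^{(k)}$ for each $k = 1, \ldots, d$. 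The goal is to encode the existence of such $(z_i, P_j)$ as a system of equations over $\Z \wr \Z$ that is computable from $\bff_0$.

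The encoding relies on two observations about the matrix representation~\eqref{eq:defwr}. For any $v \in \Z \wr \Z$ with top-left entry $X^{b_v}$ and any element $\tilde f$ of the abelian subgroup $A$ corresponding to a polynomial $f$, a direct matrix computation gives that $v \tilde f v^{-1}$ is the element of $A$ corresponding to $X^{b_v} f$; thus conjugation realizes multiplication of $f$ by the monomial $X^{b_v}$. Moreover, the centralizer of $t$ in $\Z \wr \Z$ is exactly $A$, since $gt = tg$ forces the top-left entry of $g$ to equal $1$. Consequently, the single equation $u t u^{-1} t^{-1} = e$ constrains a variable $u$ to range over $A \cong \Z[X^{\pm}]$, after which the polynomial component of $u$ can be chosen arbitrarily.

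Combining these, introduce variables $v_1, \ldots, v_n$ (intended to realize $X^{z_i}$ via $z_i = b_{v_i}$) and $u_1, \ldots, u_s$ (intended to realize $P_j$), and form the following system of $s + d$ equations. For each $j \in \{1, \ldots, s\}$, impose the commutation equation $u_j t u_j^{-1} t^{-1} = e$. For each coordinate $k \in \{1, \ldots, d\}$, impose
\[
\prod_{i=1}^n \bigl(v_i \tilde f_i^{(k)} v_i^{-1}\bigr) \cdot \prod_{j=1}^s \prod_{\ell} \bigl(a^\ell u_j a^{-\ell}\bigr)^{-c_{j,k,\ell}} = \tilde f_0^{(k)},
\]
where $g_j^{(k)} = \sum_\ell c_{j,k,\ell} X^\ell$ and $\tilde h$ denotes the element of $A$ corresponding to a polynomial $h$. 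Once the commutation equations force each $u_j$ into $A$, every factor on the left lies in $A$; these factors correspond respectively to $X^{z_i} f_i^{(k)}$ and to $c_{j,k,\ell} X^\ell P_j$; and the displayed equation collapses to the scalar identity above. Conversely, given any group-theoretic solution, one reads off $z_i = b_{v_i}$ and lets $P_j$ be the polynomial part of $u_j$.

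The main technical point — essentially the only one — is the enforcement of $u_j \in A$. Without this constraint, the power $(a^\ell u_j a^{-\ell})^c$ develops spurious terms involving $b_{u_j}$ (explicitly, its polynomial component becomes $X^\ell P_j \bigl(1 + X^{b_{u_j}} + \cdots + X^{(c-1)b_{u_j}}\bigr)$), which destroys the linearity needed to recover the target identity. The centralizer-of-$t$ trick removes this difficulty cleanly in one additional equation per $u_j$. Since Theorem~\ref{thm:linequndec} is undecidable and the construction above is uniformly computable from $\bff_0$, Theorem~\ref{thm:syseq} follows.
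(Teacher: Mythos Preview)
Your argument is correct and takes a genuinely different route from the paper. The paper reduces from Proposition~\ref{prop:linequndec} rather than directly from Theorem~\ref{thm:linequndec}: it uses that the divisors there lie in $\{0,(X-1)^2,(X-1)^3\}$ and encodes ``divisible by $(X-1)^k$'' via $k$-fold commutator words (Lemma~\ref{lem:comms}). You instead handle an arbitrary finitely presented quotient $\mA=\Z[X^{\pm}]^d/Q$, introduce Laurent-polynomial unknowns $P_j$ for the $Q$-relations, realize each as a group variable pinned to $A$ by a centralizer equation, and assemble multiplication by the fixed polynomial $g_j^{(k)}$ from finitely many shifts and powers. Your approach is more general --- it needs nothing special about $Q$ --- and bypasses the commutator lemma; the paper's approach has the virtue that its auxiliary words are iterated commutators, tying the construction to the derived series of $\Z\wr\Z$.

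One caveat: your symbols $a$ and $t$ are never defined and are reversed from the standard wreath-product convention. For the centralizer of $t$ to be exactly $A$ (so that $gt=tg$ forces the top-left entry of $g$ to be $1$), $t$ must be a nontrivial element of $A$, e.g.\ $(1,0)$; with the usual choice $t=(0,1)$ the centralizer is $\langle t\rangle$, not $A$. Likewise, for conjugation by $a^{\ell}$ to multiply the polynomial part of $u_j\in A$ by $X^{\ell}$, the element $a$ must be the shift $(0,1)$, not an element of $A$. The mathematics is fine once the symbols are bound to these elements, but a reader using the standard names will read your centralizer claim as false; you should define $a$ and $t$ explicitly.
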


A fortiori, the existential theory of the group $\Z \wr \Z$ is undecidable.
We also show that the decidability of solving systems of quadratic equations in $A \wr \Z$ and $\mathsf{BS}(1, p)$ cannot be extended to \emph{all} abelian-by-cyclic groups:

\begin{restatable}{theorem}{thmquadeq}\label{thm:quadeq}
    There exists a finitely generated abelian-by-cyclic group $\mA \rtimes \Z$, as well as its elements $h_1, \ldots, h_m$, such that the following problem is undecidable.
    
    \smallskip
    \noindent \textbf{Input:} an element $h_0 \in \mA \rtimes \Z$.
    
    \noindent \textbf{Question:} whether there exist $g_0, g_1, \ldots, g_m \in \mA \rtimes \Z$, such that
\[
    (g_0 h_0 g_0^{-1}) (g_1 h_1 g_1^{-1}) \cdots (g_m h_m g_m^{-1}) = e.
\]    
    A fortiori, solving (a single) quadratic equation is undecidable in the abelian-by-cyclic group $\mA \rtimes \Z$.
\end{restatable}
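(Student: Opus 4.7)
The plan is to derive Theorem~\ref{thm:quadeq} directly from Theorem~\ref{thm:linequndec}. Take the $\Z[X^{\pm}]$-module $\mA = \Z[X^{\pm}]^d/Q$ and the elements $\bff_1, \ldots, \bff_n \in \mA$ supplied by Theorem~\ref{thm:linequndec}, and let $G = \mA \rtimes \Z$ be the semidirect product in which the generator $t$ of the cyclic factor acts on $\mA$ by multiplication by $X$. Since $\mA$ is finitely generated as a $\Z[X^{\pm}]$-module, $G$ is a finitely generated abelian-by-cyclic group: if $e_1, \ldots, e_d$ generate $\mA$ over $\Z[X^{\pm}]$, then $G$ is generated as a group by $t$ together with $(e_1, 1), \ldots, (e_d, 1)$, using the fact that $(0, t^k)(e_i, 1)(0, t^k)^{-1} = (X^k e_i, 1)$.

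The crux is the conjugation formula in $G$. Writing elements as pairs $(a, t^b)$ with $a \in \mA$ and $b \in \Z$, one computes $(a, t^b)(c, t^e)(a, t^b)^{-1} = ((1-X^e)a + X^b c,\, t^e)$. When the conjugated element lies in $\mA$, i.e.\ $e = 0$, this collapses to $(a, t^b)(c, 1)(a, t^b)^{-1} = (X^b c, 1)$, so the $\mA$-component of the conjugator becomes irrelevant. Consequently, if I choose every $h_i$ to lie in $\mA$, say $h_i = (c_i, 1)$, then for any $g_i = (a_i, t^{b_i})$ the product equals
\[
    \prod_{i=0}^m g_i h_i g_i^{-1} \;=\; \Bigl(\sum_{i=0}^m X^{b_i} c_i,\, 1\Bigr),
\]
which is trivial in $G$ if and only if $\sum_{i=0}^m X^{b_i} c_i = 0$ in $\mA$.

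With this in hand, I set $m := n$ and fix $h_i := (\bff_i, 1)$ for $i = 1, \ldots, n$. On input an element $h_0 \in G$, I first inspect its $t$-component: since conjugation preserves the $t$-component and the constants $h_1, \ldots, h_n$ all lie in $\mA$, the product cannot be $e$ unless $h_0$ also lies in $\mA$, in which case I output "no". Otherwise $h_0 = (-\bff_0, 1)$ for some $\bff_0 \in \mA$ (the sign is cosmetic). The previous display then reduces solvability of the equation to the existence of $b_0, \ldots, b_n \in \Z$ with $-X^{b_0}\bff_0 + \sum_{i=1}^n X^{b_i}\bff_i = 0$; multiplying by the unit $X^{-b_0}$ and letting $z_i := b_i - b_0$, this is exactly the equation $\sum_{i=1}^n X^{z_i}\bff_i = \bff_0$ whose solvability in $\Z$ Theorem~\ref{thm:linequndec} guarantees to be undecidable. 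Hence the quadratic-equation problem in $G$ is undecidable.

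Finally, the word $g_0 h_0 g_0^{-1} g_1 h_1 g_1^{-1} \cdots g_n h_n g_n^{-1}$ is genuinely quadratic, since each variable $g_i$ appears exactly twice (once positively, once inverted), so the \emph{a fortiori} part of the statement follows. I anticipate no serious obstacle in this plan; the only real point of care is arranging the semidirect-product action so that conjugation inside $G$ implements the monomial scaling $c \mapsto X^b c$ on $\mA$ that is the object of Theorem~\ref{thm:linequndec}.
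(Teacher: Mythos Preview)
Your proposal is correct and follows essentially the same route as the paper: reduce from Theorem~\ref{thm:linequndec} by placing the $h_i$ inside $\mA$, use the conjugation identity $g(\bc,0)g^{-1} = (X^b\bc,0)$ for $g=(\ba,b)$, and then de-homogenize via $z_i = b_i - b_0$. The only cosmetic difference is that you spell out why $G$ is finitely generated and briefly treat the (trivial) case $h_0 \notin \mA$; note that your sentence ``in which case I output `no'\,'' reads as if you were building a decision procedure rather than a reduction, so you may want to rephrase it simply as ``the reduction only needs inputs of the form $h_0 = (-\bff_0, 1)$''.
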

Note that the quadratic equation appearing in Theorem~\ref{thm:quadeq} has the special form of a \emph{spherical equation}~\cite{lysenok2021orientable}.
This shows that solving spherical equations is undecidable in finitely generated metabelian groups, providing a negative answer to an open problem of Lysenok and Ushakov~\cite{lysenok2021orientable}.

Theorem~\ref{thm:syseq} and \ref{thm:quadeq} are our main group theory results, and will be proven in Subsections~\ref{subsec:ZwrZ} and \ref{subsec:quadeq}, respectively.

\subsection*{Knapsack Problem}
In~\cite{myasnikov2015knapsack}, Myasnikov, Nikolaev, and Ushakov began the investigation of classical optimization problems, which are formulated over the integers, for noncommutative groups.
The \emph{Knapsack Problem} was introduced for a finitely generated group $G$ among other problems. 
The input for the Knapsack Problem is a sequence of group elements $g_1, \ldots , g_n, g \in G$, and the question is whether there exist non-negative integers $z_1, \ldots, z_n \in \N$ such that $g_1^{z_1} g_2^{z_2} \cdots g_n^{z_n} = g$.

On one hand, by taking $G$ to be the group of integers $\Z$, we recover a variant of the classical knapsack problem.
On the other hand, the Knapsack Problem for an arbitrary group $G$ can be seen as an essential special case of the \emph{Rational Subset Membership Problem} (whether a given regular expression over $G$ contains the neutral element).
Indeed, solving the Knapsack Problem often provides an important first step towards solving Rational Subset Membership~\cite{bodart2024membership}.
The Knapsack Problem in non-commutative groups has lately received much attention from computational group theorists.
Figelius, Ganardi, K\"{o}nig, Lohrey and Zetzsche showed decidability and and NP-completeness of the Knapsack Problem in wreath products $G \wr \Z$, where $G$ is any non-trivial finitely generated abelian group~\cite{DBLP:conf/stacs/GanardiKLZ18}, or any finite nilpotent group~\cite{DBLP:conf/icalp/FigeliusGLZ20}. 
Nevertheless, decidability of the Knapsack Problem does not extend to general abelian-by-cyclic groups. Mishchenko and Treier~\cite{mishchenko2017knapsack} showed that there exist abelian-by-cyclic groups with undecidable Knapsack Problem.
In Section~\ref{sec:KP}, we apply Theorem~\ref{thm:linequndec} give a different and more direct proof of Mishchenko and Treier's result:

\begin{restatable}{theorem}{thmknapsack}\label{thm:knapsack}
    There exists a finitely generated abelian-by-cyclic group $\mA \rtimes \Z$, as well as its elements $g_1, \ldots, g_m$, such that the following problem is undecidable.
    
    \smallskip
    \noindent \textbf{Input:} an element $g \in \mA \rtimes \Z$.
    
    \noindent \textbf{Question:} whether there exist $z_1, \ldots, z_m \in \N$ such that
$
    g_1^{z_1} g_2^{z_2} \cdots g_m^{z_m} = g
$.
    \smallskip        
    
    A fortiori, the Knapsack Problem is undecidable in the abelian-by-cyclic group $\mA \rtimes \Z$.
\end{restatable}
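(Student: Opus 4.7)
The plan is to reduce the undecidable problem of Theorem~\ref{thm:linequndec} to the Knapsack Problem in an appropriate abelian-by-cyclic group built from the module $\mA$ furnished by that theorem. Writing elements of the semidirect product $\mA \rtimes \Z$ as pairs $(a,b)$ with multiplication $(a_1,b_1)(a_2,b_2) = (a_1 + X^{b_1} a_2, b_1 + b_2)$, the identity $t^{z} \bff\, t^{-z} = X^{z} \bff$ (with $t = (0,1)$) suggests encoding each term $X^{z_i} \bff_i$ by a suitable $t$-shift within the knapsack product.

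To allow the non-negative knapsack exponents to realise arbitrary integer shifts $z_i \in \Z$, I would take a knapsack generator list of the form
\[
t,\ t^{-1},\ \bff_1',\ t,\ t^{-1},\ \bff_2',\ \ldots,\ \bff_n',\ t,\ t^{-1}
\]
(that is, $3n+2$ generators in total), where $\bff_1', \ldots, \bff_n'$ are lifts of $\bff_1, \ldots, \bff_n$ specified below. Collecting each consecutive pair $t^{\alpha_j} t^{-\gamma_j}$ into a single shift $t^{\delta_j}$ with $\delta_j = \alpha_j - \gamma_j$, a direct semidirect-product computation shows the knapsack product equals
\[
\Bigl(\sum_{i=1}^n X^{z_i}\, \beta_i \bff_i',\ \sum_{j=0}^n \delta_j\Bigr), \qquad z_i := \delta_0 + \delta_1 + \cdots + \delta_{i-1},
\]
where $\beta_i \in \N$ is the knapsack exponent on $\bff_i'$. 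Since $\delta_j$ ranges over all of $\Z$ as $\alpha_j, \gamma_j$ vary in $\N$, the tuple $(z_1, \ldots, z_n)$ can be made to hit every element of $\Z^n$, and the second-coordinate constraint $\sum_j \delta_j = 0$ is freely satisfiable (e.g.\ by absorbing the remainder into $\delta_n$).

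The main obstacle will be that each $\beta_i$ is a free variable in $\N$, whereas Theorem~\ref{thm:linequndec} demands coefficient exactly one on every $\bff_i$. My fix is to enlarge the module: set
\[
\mA' = \mA \oplus \bigoplus_{i=1}^n \Z[X^{\pm}]/(X-1),
\]
which is still finitely presented, and take $\bff_i' = \bff_i + e_i$, where $e_i$ is the generator of the $i$-th summand. Because $X$ acts trivially on each $\Z[X^{\pm}]/(X-1) \cong \Z$, the $\Z^n$-part of $\sum_i X^{z_i} \beta_i \bff_i'$ collapses to $(\beta_1, \beta_2, \ldots, \beta_n)$. Taking $G = \mA' \rtimes \Z$ and, given the input $\bff_0 \in \mA$, setting the knapsack target $g = \bigl((\bff_0,\, e_1 + \cdots + e_n),\, 0\bigr) \in G$ thus forces $\beta_i = 1$ for every $i$.

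Putting it together, the knapsack instance admits a solution iff there exist $z_1, \ldots, z_n \in \Z$ with $\sum_{i=1}^n X^{z_i} \bff_i = \bff_0$; this is undecidable by Theorem~\ref{thm:linequndec}. Since $\mA'$ is finitely presented, $G = \mA' \rtimes \Z$ is finitely generated and abelian-by-cyclic, completing the reduction.
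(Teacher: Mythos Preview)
Your proof is correct but takes a different route from the paper's. Both reduce from Theorem~\ref{thm:linequndec}, and both use the pattern $t^{\delta_0}\, (\text{something}) \, t^{\delta_1}\, (\text{something}) \cdots$ to realise the monomial shifts $X^{z_i}$, splitting each $t^{\delta_j}$ into $t^{\alpha_j}(t^{-1})^{\gamma_j}$ to pass from integer to non-negative exponents. The divergence is in what the ``somethings'' are.

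You place the module elements $\bff_i'$ directly among the knapsack generators, which makes each of them pick up a free multiplicity $\beta_i \in \N$; to kill this freedom you enlarge the module by $n$ copies of $\Z[X^{\pm}]/(X-1)\cong\Z$, tagging each $\bff_i$ with a basis vector $e_i$ so that the target forces $\beta_i = 1$. This works cleanly, and the gadget of appending invariant coordinates to pin down multiplicities is a reusable trick. The paper instead keeps the original module $\mA$ and avoids the $\beta_i$ problem altogether: it takes the knapsack generators to be conjugates $g_{k+1} = h_1\cdots h_k\, (\bzer,1)\, (h_1\cdots h_k)^{-1}$ of $t$ by partial products of the $h_i = (\bff_i,0)$, so that the $\bff_i$'s enter with coefficient exactly one through the conjugation structure, and only the shift exponents remain as variables. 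The trade-off: your construction is perhaps more transparent and modular, while the paper's is slightly slicker in that it produces the undecidability in the group $\mA\rtimes\Z$ built from the very module of Theorem~\ref{thm:linequndec}, without enlargement.
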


\subsection*{Coset Intersection}
The last problem we study is \emph{Coset Intersection}.
Given a finite subset $S$ of a group $G$, denote by $\gen{S}$ the subgroup generated by $S$, and let $h \gen{S}$ denote the coset $\{hg \mid g \in \gen{S}\}$.
The input of the Coset Intersection problem is two finite subsets $\mG, \mH \subset G$ as well as an element $h \in G$, and the question is whether the intersection $\gen{\mG} \cap h \gen{\mH}$ is empty.

The problem of Coset Intersection is motivated by numerous other areas such as Graph Isomorphism~\cite{luks1982isomorphism}, vector reachability~\cite{potapov2019vector} and automata theory~\cite{delgado2018intersection}.
Coset Intersection is related to the conjugacy problem and quadratic equations, as solving a system of equations of the form $w = x_1 h x_1^{-1} h'^{-1}$ boils down to deciding Coset Intersection.
At the same time, Coset Intersection is a much more tractable special case of Rational Subset Membership, compared to the Knapsack Problem.
Recent results by Lohrey, Steinberg and Zetzsche~\cite{lohrey2015rational} showed decidability of Rational Subset Membership in $\Z_p \wr \Z,\; p \geq 2$.
This result has been extended to $\mathsf{BS}(1, p),\; p \geq 2,$ by Cadilhac, Chistikov and Zetzsche~\cite{DBLP:conf/icalp/CadilhacCZ20}.
In Section~\ref{sec:inter}, we show that Coset Intersection is in fact decidable in \emph{all} finitely generated abelian-by-cyclic groups:

\begin{restatable}{theorem}{thmcoset}\label{thm:coset}
    Given as input a finitely generated abelian-by-cyclic group $\mA \rtimes \Z$, as well as two finite sets of elements $\mG = \{g_1, \ldots, g_K\}, \mH = \{h_1, \ldots, h_M\}$ and $h \in \mA \rtimes \Z$, it is decidable whether $\gen{\mG} \cap h \gen{\mH} = \emptyset$.
\end{restatable}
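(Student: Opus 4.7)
The plan is to reduce Coset Intersection to a single instance of Theorem~\ref{thm:onemono}, exploiting the structure $G = \mA \rtimes \Z$ of a finitely generated abelian-by-cyclic group, where $\mA$ is a finitely presented $\Z[X^\pm]$-module on which $\Z$ acts by multiplication by $X$. I write $g = (a, z) \in \mA \times \Z$ with product $(a, z)(a', z') = (a + X^z a', z + z')$, and let $\pi \colon G \twoheadrightarrow \Z$ denote the canonical projection.

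First, project to $\Z$. Set $d_1 := \gcd_i \pi(g_i)$ and $d_2 := \gcd_j \pi(h_j)$; if either vanishes, then one of $\gen{\mG}, \gen{\mH}$ lies inside $\mA$, and the problem reduces to coset-intersection of two finitely generated $\Z[X^\pm]$-submodules of $\mA$, decidable by standard effective-module algorithms. Otherwise, $\pi(\gen{\mG} \cap h\gen{\mH}) \subseteq d_1 \Z \cap (\pi(h) + d_2\Z)$, whose nonemptiness is a Bezout check; the integer lifts with $k d_1 - l d_2 = \pi(h)$ form a one-parameter family $(k(t), l(t))_{t \in \Z}$. Fixing preimages $g_\ast = (c_\ast, d_1) \in \gen{\mG}$ and $h_\ast = (e_\ast, d_2) \in \gen{\mH}$ and performing a Reidemeister--Schreier computation with transversals $\{g_\ast^k\}_{k \in \Z}$ and $\{h_\ast^l\}_{l \in \Z}$, one shows that $\gen{\mG} \cap \mA$ is a finitely generated $\Z[X^{\pm d_1}]$-submodule of $\mA$, with generators the first coordinates of $g_i g_\ast^{-\pi(g_i)/d_1}$, and similarly $\gen{\mH} \cap \mA$ is a finitely generated $\Z[X^{\pm d_2}]$-submodule. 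Finite generation holds since $\mA$, being f.g.\ over $\Z[X^\pm]$, is Noetherian over each subring $\Z[X^{\pm d_i}]$.

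Any element of $\gen{\mG}$ has the form $(m + c_\ast^{(k)}, k d_1)$ for $m \in \gen{\mG} \cap \mA$, where $c_\ast^{(k)} \in \mA$ is the first coordinate of $g_\ast^k$; likewise for $h \gen{\mH}$. Equating reduces the problem to deciding whether there exist $t \in \Z$, $m \in \gen{\mG} \cap \mA$, and $n \in \gen{\mH} \cap \mA$ with $m - X^{\pi(h)} n = R(t)$ in $\mA$, for an explicit $R(t)$. Writing $\mu := \lcm(d_1, d_2)$, $\delta := \gcd(d_1, d_2)$, $p := d_1/\delta$, $q := d_2/\delta$, and $Y := X^\mu$, the geometric-sum identities $c_\ast^{(k(t))} = c_\ast^{(k_0)} + X^{k_0 d_1} c_\ast^{(q)} \sum_{s=0}^{t-1} Y^s$ (and symmetrically for $e_\ast^{(l(t))}$ using $p$) give $R(t) = R_0 + \alpha \sum_{s=0}^{t-1} Y^s$ for computable $R_0, \alpha \in \mA$ (with the convention $-\sum_{s=t}^{-1} Y^s$ for $t < 0$). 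The submodule $\mW := (\gen{\mG} \cap \mA) + X^{\pi(h)} (\gen{\mH} \cap \mA) \subseteq \mA$ is stable under $Y$ (because $\mu$ is divisible by both $d_1$ and $d_2$), so it is a finitely generated $\Z[Y^\pm]$-submodule of the finitely presented $\Z[Y^\pm]$-module $\mA$; the quotient $\mA' := \mA/\mW$ is therefore a finitely presented $\Z[Y^\pm]$-module. The problem becomes whether some $t \in \Z$ satisfies $\alpha \sum_{s=0}^{t-1} Y^s = -R_0$ in $\mA'$, and multiplying by $Y - 1$ yields the monomial equation
\[
Y^t \alpha \;=\; \alpha - (Y-1) R_0 \quad \text{in } \mA',
\]
a direct instance of Theorem~\ref{thm:onemono}.

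The main obstacle is that the multiplication by $Y - 1$ is only \emph{a priori} necessary: a positive Noskov answer $t^*$ for $Y^{t^*} \alpha = \alpha - (Y-1) R_0$ guarantees only that $\alpha \sum_{s=0}^{t^*-1} Y^s + R_0 \in \ker(Y-1) \subseteq \mA'$, not that it vanishes. I plan to address this via a primary-decomposition analysis of $\mA'$ as a $\Z[Y^\pm]$-module: on the $(Y-1)$-primary part (where $Y - 1$ acts nilpotently), the expansion $Y^s = (1 + (Y-1))^s$ reduces the equation to a polynomial identity in $t$ over a finite extension of $\Z$, a decidable univariate Diophantine condition; on the complementary part (where $Y - 1$ is a non-zero-divisor), the Noskov reduction is an equivalence and Theorem~\ref{thm:onemono} applies directly. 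Combining the verdicts from each piece yields the decision procedure.
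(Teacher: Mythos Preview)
Your handling of the degenerate cases is incorrect. When $d_1 = 0$, the subgroup $\gen{\mG}$ does lie in $\mA$, but it is only the $\Z$-span of the first coordinates of the $g_i$, \emph{not} a $\Z[X^\pm]$-submodule: there is no reason for it to be closed under multiplication by $X$. The paper treats these situations (its Cases~1 and~2) by computing a syzygy module over $\Z[X^\pm]$ and then intersecting with an integer lattice via Lemma~\ref{lem:decinterZ}. The mixed case, where exactly one of $\gen{\mG},\gen{\mH}$ lies inside $\mA$, needs its own argument and is not subsumed by a pure module computation.

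In the main case, your reduction to the equation $\alpha\sum_{s=0}^{t-1} Y^s = -R_0$ in $\mA' = \mA/\mathcal{W}$ parallels the paper's Lemma~\ref{lem:cosettoeq}. The paper's device is to quotient by $(Y-1)\mathcal{W}$ rather than $\mathcal{W}$: since $Y^t\alpha - \bigl(\alpha-(Y-1)R_0\bigr) = (Y-1)\bigl(\alpha\sum_s Y^s + R_0\bigr)$ identically, working modulo $(Y-1)\mathcal{W}$ is presented as making the monomial equation an equivalence, so a single appeal to Theorem~\ref{thm:onemono} suffices and no torsion analysis is performed. Your instinct that $(Y-1)$-torsion requires care is nonetheless well placed, and the primary-decomposition route is a reasonable alternative.

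The genuine gap in your outline is ``combining the verdicts''. There is no direct-sum splitting $\mA' \cong T \oplus (\mA'/T)$ in general, so you cannot solve on the two pieces independently and intersect the answers. What works is sequential: first apply Noskov on $\mA'/T$ (where $Y-1$ is a non-zero-divisor, so the monomial reformulation \emph{is} an equivalence) to compute the \emph{full} solution set $S_1$, which is either empty or an arithmetic progression $t_0 + D\Z$; this requires Noskov's lemma in its constructive form (producing a presentation of the unit subgroup), not merely the existence statement of Theorem~\ref{thm:onemono}. Then for $t = t_0 + Du \in S_1$ the residue $\alpha\sum_s Y^s + R_0$ lies in $T$; since $Y-1$ is nilpotent on $T$ and $T$ is finitely generated over $\Z[Y^\pm]/(Y-1)^N$ hence over $\Z$, this residue is a polynomial function of $u$ with values in a finitely generated abelian group, and its vanishing is a decidable one-variable condition. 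Your final paragraph should make this two-stage structure explicit.
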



Table~\ref{tbl:stateofart} summarizes our results for abelian-by-cyclic groups in the context of the current state of art.
The question marks denote problems whose decidability status is unknown. The decidability status for empty blocks are subsumed by results in the same row or column.

\begin{table}[h!]
\centering
\begin{tabular}{ | m{1.5cm} | m{2.4cm}| m{2.2cm}| m{2.4cm} | m{2.0cm} | m{2.8cm} | } \hline
    & Quadratic Equations & General \; Equations & Knapsack Problem & Coset \quad\quad\; Intersection & Rational \;\;\quad Subset\\
  \hline
  \hline
  $\mathsf{BS}(1, p)$ & NP-complete \cite{mandel2023quadratic} & ? & NP-complete \cite{lohrey2020knapsack} & & PSPACE-comp.\ \cite{DBLP:conf/icalp/CadilhacCZ20} \\
  \hline
  $\Z_p \wr \Z$ & NP-complete \cite{ushakov2023quadratic} & ? & NP-complete \cite{DBLP:conf/stacs/GanardiKLZ18} & & decidable \cite{lohrey2015rational}\\
  \hline
  $\Z \wr \Z$ & decidable \cite{kharlampovich2020diophantine} & undecidable\textsuperscript{\textdagger} & NP-complete \cite{DBLP:conf/stacs/GanardiKLZ18} &  & undecidable \cite{lohrey2015rational} \\
  \hline
  abelian-by-cyclic & undecidable\textsuperscript{\textdagger} & & undecidable$^{*}$ \cite{mishchenko2017knapsack} & decidable\textsuperscript{\textdagger} & \\
  \hline
\end{tabular}
\caption{\label{tbl:stateofart} \quad \textsuperscript{\textdagger} $=$ our new results, \quad $^* =$ our alternative construction}
\end{table}

\section{Preliminaries}\label{sec:prelim}

\subsection*{Laurent polynomial ring and modules}

A (univariate) \emph{Laurent polynomial} with coefficients over $\Z$ is an expression of the form
\[
f = \sum_{i = p}^q a_i X^i, \quad \text{where $p, q \in \Z$ and $a_i \in \Z, i = p, p+1, \ldots, q$.}
\]
The set of all Laurent polynomials with coefficients over $\Z$ forms a ring and is denoted by $\Z[X^{\pm}]$.

Let $R$ be a commutative ring.
An $R$-module is defined as an abelian group $(M, +)$ along with an operation $\cdot \;\colon R \times M \rightarrow M$ satisfying $f \cdot (a+b) = f \cdot a + f \cdot b$, $(f + g) \cdot a = f \cdot a + g \cdot a$, $fg \cdot a = f \cdot (g \cdot a)$ and $1 \cdot a = a$.
We will denote by $\bzer$ the neutral element of an $R$-module $M$.
If $N$ and $N'$ are $R$-submodule of $M$, then $N + N' \coloneqq \{n + n' \mid n \in N, n' \in N'\}$ is again an $R$-submodule of $M$.

For any $d \in \N$, the direct power $\Z[X^{\pm}]^d$ is a $\Z[X^{\pm}]$-module by $g \cdot (f_1, \ldots, f_d) \coloneqq (gf_1, \ldots, gf_d)$.
Throughout this paper, we use the bold symbol $\bff$ to denote a vector $(f_1, \ldots, f_d) \in \Z[X^{\pm}]^d$.
Given $\bff_1, \ldots, \bff_m \in \Z[X^{\pm}]^d$, we say they \emph{generate} the $\Z[X^{\pm}]$-module 
\[
\sum_{i=1}^m \Z[X^{\pm}] \cdot \bff_i \coloneqq \left\{\sum_{i=1}^m p_i \cdot \bff_i \;\middle|\; p_1, \ldots, p_m \in \Z[X^{\pm}] \right\}.
\]
Given two $\Z[X^{\pm}]$-submodules $N, M$ of $\Z[X^{\pm}]^d$ such that $N \subseteq M$, we can define the quotient $M/N \coloneqq \{\overline{\bm} \mid \bm \in M\}$ where $\overline{\bm_1} = \overline{\bm_2}$ if and only if $\bm_1 - \bm_2 \in N$.
This quotient is also a $\Z[X^{\pm}]$-module.
We say that a $\Z[X^{\pm}]$-module $\mA$ is \emph{finitely presented} if it can be written as a quotient $M/N$ for two submodules $M, N$ of $\Z[X^{\pm}]^d$ for some $d \in \N$, where both $M$ and $N$ are generated by finitely many elements.
Such a pair $(M, N)$, given by their respective generators, is called a \emph{finite presentation} of $\mA$.
The element $\overline{\bm}$ of $\mA$ is effectively represented by $\bm \in \Z[X^{\pm}]^d$, this representation is unique modulo $N$.
In this paper, we will often write $\bm$ instead of $\overline{\bm}$, when the context is clear that this represents an element of the quotient $\mA = M/N$.

\subsection*{Abelian-by-cyclic groups}
We now formally define abelian-by-cyclic groups.
\begin{definition}
    A group $G$ is called \emph{abelian-by-cyclic} if it admits an abelian normal subgroup $\mA$ such that $G/\mA \cong \Z$.
\end{definition}


The abelian subgroup $\mA$ admits a natural $\Z[X^{\pm}]$-module structure in the following sense.
It is a classic result~\cite[p.17]{boler1976conjugacy} that every finitely generated abelian-by-cyclic group $G$ can be written as a \emph{semidirect product} $\mA \rtimes \Z$:
\begin{equation}
    \mA \rtimes \Z \coloneqq \left\{ (\ba, z) \;\middle|\; \ba \in \mA, z \in \Z \right\},
\end{equation}
where $\mA$ is a finitely presented $\Z[X^{\pm}]$-module.
The group law in $\mA \rtimes \Z$ is defined by
\[
(\ba, z) \cdot (\ba', z') = (\ba + X^z \cdot \ba', z + z'), \quad (\ba, z)^{-1} = (- X^{-z} \cdot \ba, -z).
\]
The neutral element of $\mA \rtimes \Z$ is $(\bzer, 0)$.
Intuitively, the element $(\ba, z)$ is analogous to a $2 \times 2$ matrix
$
\begin{pmatrix}
X^{z} & \ba \\
0 & 1 \\
\end{pmatrix}
$, where group multiplication is represented by matrix multiplication.
We naturally identify $\mA$ with the normal subgroup $\left\{ (\ba, 0) \;\middle|\; \ba \in \mA\right\}$ of $\mA \rtimes \Z$.
In particular, the quotient $\left( \mA \rtimes \Z \right)/\mA$ is isomorphic to $\Z$, so $\mA \rtimes \Z$ is indeed abelian-by-cyclic.

\begin{example}
    Take $\mA \coloneqq \Z[X^{\pm}]$ considered as a $\Z[X^{\pm}]$-module, then we recover the definition~\eqref{eq:defwr} of the wreath product $\Z \wr \Z = \Z[X^{\pm}] \rtimes \Z$.
    If $\mA \coloneqq \Z_p[X^{\pm}] = \Z[X^{\pm}]/\left(\Z[X^{\pm}] \cdot p\right)$, then we obtain the lamplighter group $\Z_p \wr \Z = \Z_p[X^{\pm}] \rtimes \Z$.
    In general, if $A$ is any finitely generated abelian group, and let $A[X^{\pm}]$ denote the set of Laurent polynomials with coefficients in $A$, then $A[X^{\pm}]$ is a finitely presented $\Z[X^{\pm}]$-module (but not necessarily a ring), and the wreath product $A \wr \Z$ is defined as the semidirect product $A[X^{\pm}] \rtimes \Z$.

    If we take $\mA \coloneqq \Z[1/p] = \Z[X^{\pm}]/\left(\Z[X^{\pm}] \cdot (X - p)\right)$, then we recover the definition~\eqref{eq:defbs} of the Baumslag-Solitar group $\mathsf{BS}(1, p) = \Z[1/p] \rtimes \Z$.
\end{example}

Throughout this paper, a finitely generated abelian-by-cyclic group $G$ is always represented as the semidirect product $\mA \rtimes \Z$, where $\mA$ is a $\Z[X^{\pm}]$-module given by a finite presentation.

\section{Linear equation with monomial constraints}\label{sec:lineq}

In this section, we prove undecidability of Theorem~\ref{thm:linequndec} by embedding Hilbert's tenth problem.
For $f, g \in \Z[X^{\pm}]$, we write $f \mid g$ if there exists $h \in \Z[X^{\pm}]$ such that $g = fh$.
We show undecidability of Theorem~\ref{thm:linequndec}, even for the following special case:

\begin{proposition}\label{prop:linequndec}
    There exist $k, n \in \N$, polynomials $p_1, \ldots, p_k \in \{0, (X-1)^2, (X-1)^3\}$, and $f_{ij} \in \Z[X^{\pm}]$, $i = 1, \ldots, n; \; j = 1, \ldots, k$, such that the following problem is undecidable:

    \smallskip
    \noindent \textbf{Input:} polynomials $f_{0j} \in \Z[X^{\pm}]$, $j = 1, \ldots, k$.
    
    \noindent \textbf{Question:} 
    whether there exist $z_1, \ldots, z_n \in \Z$ that satisfy the following system:
    \begin{align}\label{eq:sysmono}
        p_1 & \mid X^{z_1} f_{11} + \cdots + X^{z_n} f_{n1} - f_{01}, \nonumber\\
        p_2 & \mid X^{z_1} f_{12} + \cdots + X^{z_n} f_{n2} - f_{02}, \nonumber\\
        & \vdots \nonumber\\
        p_k & \mid X^{z_1} f_{1k} + \cdots + X^{z_n} f_{nk} - f_{0k}.
    \end{align}
    In particular, when $p_i = 0$, the expression $p_i \mid F$ means $F = 0$.
\end{proposition}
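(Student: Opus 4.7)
The plan is to reduce from Hilbert's Tenth Problem over $\Z$. By the Matiyasevich--Davis--Robinson--Putnam theorem, there exists a fixed polynomial $P(Y, z_1, \ldots, z_{n_0}) \in \Z[Y, z_1, \ldots, z_{n_0}]$ such that deciding, for a given $y \in \Z$, whether $P(y, z_1, \ldots, z_{n_0}) = 0$ has an integer solution is undecidable. I will reduce this parameterized problem to~\eqref{eq:sysmono} by uniformly computing the inputs $f_{01}, \ldots, f_{0k}$ from $y$, keeping $n$, $k$, the polynomials $p_j$, and the $f_{ij}$ for $i \geq 1$ fixed. A preliminary step flattens $P(y, z_1, \ldots, z_{n_0}) = 0$, by standard introduction of auxiliary integer variables, into a system in variables $z_1, \ldots, z_n$ (with $n \geq n_0$ fixed) where each equation is either a linear equation $\sum_{i=1}^n c_i z_i = d$ with fixed integer coefficients, or a multiplication $z_k = z_i z_j$. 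The parameter $y$ enters only through a single distinguished linear equation $z_0 = y$.

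The key observation is that for $F \in \Z[X^{\pm}]$ and $\ell \in \{1,2,3\}$, $(X-1)^\ell \mid F$ if and only if $F^{(r)}(1) = 0$ for $r = 0, 1, \ldots, \ell - 1$. Applied to $F_j := \sum_{i=1}^n X^{z_i} f_{ij} - f_{0j}$, one finds that $F_j(1)$ is a constant independent of the $z_i$, $F_j'(1)$ is linear in the $z_i$ with coefficients $f_{ij}(1)$ plus a known additive constant, and $F_j''(1)$ is a sum of terms $z_i^2 f_{ij}(1)$ plus linear-in-$z$ and constant terms. Hence a constraint $(X-1)^2 \mid F_j$ encodes a single linear equation in the $z_i$, whose coefficients $c_i = f_{ij}(1)$ are baked into the fixed data while the right-hand side $d$ can be read from $f_{0j}'(1)$; this accommodates both the fixed linear equations and the distinguished one involving $y$. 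The linear part of the flattening is thus encoded immediately.

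The main obstacle is encoding a multiplication $z_k = z_i z_j$ by a single constraint $(X-1)^3 \mid F_j$: such a constraint imposes the linear condition $F_j'(1) = 0$ in addition to the quadratic condition $F_j''(1) = 0$, and a nonzero $z_i^2$ contribution to $F_j''(1)$ requires $f_{ij}(1) \neq 0$, which in turn produces a nontrivial term in $F_j'(1)$. I overcome this by introducing an auxiliary variable $z_s$ meant to equal $z_i + z_j$ and exploiting the identity $(z_i + z_j)^2 - z_i^2 - z_j^2 = 2 z_i z_j$. Concretely, I set $(f_{sj}(1), f_{ij}(1), f_{jj}(1), f_{kj}(1)) = (1, -1, -1, 0)$, with all other $f_{lj}(1) = 0$, and calibrate $f_{0j}'(1)$ so that $F_j'(1) = 0$ reads exactly $z_s = z_i + z_j$. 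Choosing suitable $f_{lj}'(1)$ (in particular $f_{kj}'(1) = -1$, producing the coefficient $-2$ on $z_k$ in the linear part of $F_j''(1)$) and calibrating $f_{0j}''(1)$, the condition $F_j''(1) = 0$ reads $z_s^2 - z_i^2 - z_j^2 - 2 z_k = 0$; substituting $z_s = z_i + z_j$ then collapses this to $z_k = z_i z_j$. Assembling one divisibility constraint per elementary equation in the flattening of $P = 0$ --- of type $(X-1)^2$ for linear equations and $(X-1)^3$ for multiplications --- yields the reduction. The option $p_j = 0$ permitted by the statement is not needed in this construction.
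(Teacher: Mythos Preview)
Your argument is correct and follows the same route as the paper: both reduce from a parameterized Hilbert's tenth problem, encode linear relations via $(X{-}1)^2$-divisibility, and encode multiplication via $(X{-}1)^3$-divisibility using the polarization identity $(z_i{+}z_j)^2 - z_i^2 - z_j^2 = 2z_iz_j$ (the paper packages this through an intermediate squaring lemma and uses $p_j=0$ for constants, whereas you fold the auxiliary sum $z_s=z_i+z_j$ into the $F_j'(1)=0$ layer of the same cubic constraint and handle constants with $(X{-}1)^2$). One caveat: your claim that $F_j''(1)=0$ reads exactly $z_s^2 - z_i^2 - z_j^2 - 2z_k = 0$ is not literally achievable with integer data, since the coefficient of $z_s$ in $F_j''(1)$ is $2f_{sj}'(1)-f_{sj}(1)=2f_{sj}'(1)-1$, which is always odd; this is harmless, however, because one can choose $f_{sj}'(1)+f_{ij}'(1)=f_{sj}'(1)+f_{jj}'(1)=0$ so that the spurious linear terms cancel after substituting $z_s=z_i+z_j$.
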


Note that Proposition~\ref{prop:linequndec} will immediately yield Theorem~\ref{thm:linequndec}. Indeed, let $p_1, \ldots, p_k \in \Z[X^{\pm}]$ be as in Proposition~\ref{prop:linequndec}, and let $\mA$ be the finitely presented $\Z[X^{\pm}]$-module $\Z[X^{\pm}]^k/Q$, where
\[
Q \coloneqq \{(g_1 p_1, g_2 p_2, \ldots, g_k p_k) \mid g_1, \ldots, g_k \in \Z[X^{\pm}]\}.
\]
Then for $\bff_0 = (f_{01}, \ldots, f_{0k}), \bff_1 = (f_{11}, \ldots, f_{1k}), \ldots, \bff_n = (f_{n1}, \ldots, f_{nk}) \in \mA$, we have $X^{z_1} \bff_1 + \cdots + X^{z_n} \bff_n = \bff_0$ if and only if the system~\eqref{eq:sysmono} is satisfied. Therefore, Proposition~\ref{prop:linequndec} implies Theorem~\ref{thm:linequndec}.
We now start proving Proposition~\ref{prop:linequndec}.

\begin{lemma}\label{lem:sqaure}
    Suppose $z_1, z_2, z_3 \in \Z$. We have
    \begin{equation}\label{eq:div3}
    (X-1)^3 \mid X^{z_1} + X^{z_2} (1-X) + X^{z_3} + (X - 3)
    \end{equation}
    if and only if $z_2 = z_1^2$ and $z_3 = - z_1$.
\end{lemma}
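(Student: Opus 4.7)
The plan is to use the classical criterion that, for any $F(X) \in \Z[X^{\pm}]$, one has $(X-1)^3 \mid F(X)$ in $\Z[X^{\pm}]$ if and only if $F(1) = F'(1) = F''(1) = 0$, where $F'$ and $F''$ denote the formal derivatives (which extend from $\Z[X]$ to $\Z[X^{\pm}]$ via the quotient rule). To justify this over the Laurent polynomial ring, multiply $F$ by a sufficiently large power $X^N$ to land in $\Z[X]$; since $X$ is a unit in $\Z[X^{\pm}]$ coprime to $X-1$, the divisibility by $(X-1)^3$ is preserved, and the claim reduces to the standard Taylor-expansion characterization of triple roots in $\Z[X]$.

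Set $F(X) = X^{z_1} + X^{z_2}(1-X) + X^{z_3} + (X-3)$. Plugging in $X=1$ gives $F(1) = 1 + 0 + 1 + (1-3) = 0$, which holds unconditionally because the constant term $-3$ was chosen precisely to cancel the three monomial contributions. Differentiating termwise, $F'(X) = z_1 X^{z_1-1} + z_2 X^{z_2-1}(1-X) - X^{z_2} + z_3 X^{z_3-1} + 1$, so $F'(1) = z_1 + 0 - 1 + z_3 + 1 = z_1 + z_3$. Thus the condition $F'(1) = 0$ is equivalent to $z_3 = -z_1$.

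For the second derivative, the key observation is that the term $X^{z_2}(1-X)$ contributes $z_2(z_2-1) X^{z_2-2}(1-X) - 2 z_2 X^{z_2-1}$, which at $X = 1$ collapses to $-2z_2$: the factor $(1-X)$ annihilates the $z_2(z_2-1)$ summand, isolating exactly the linear dependence on $z_2$ that is needed. Combined with the obvious contributions from $X^{z_1}$ and $X^{z_3}$, this gives $F''(1) = z_1(z_1-1) - 2z_2 + z_3(z_3-1)$. Substituting the already-derived relation $z_3 = -z_1$ yields $F''(1) = 2z_1^2 - 2z_2$, so $F''(1) = 0$ is equivalent to $z_2 = z_1^2$. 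Conversely, if $z_2 = z_1^2$ and $z_3 = -z_1$, then $F(1) = F'(1) = F''(1) = 0$, which gives $(X-1)^3 \mid F(X)$ by the criterion.

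There is essentially no deep obstacle here; the only subtlety is the treatment of formal derivatives in $\Z[X^{\pm}]$ (handled by the $X^N$ trick) and the cancellation in $F''(1)$, which depends crucially on the factor $(1-X)$ multiplying $X^{z_2}$. This is exactly what prevents $z_2$ from contributing a quadratic term and leaves the clean relation $z_2 = z_1^2$, i.e., the squaring operation that makes the lemma useful for encoding Hilbert's tenth problem in the subsequent proof of Proposition~\ref{prop:linequndec}.
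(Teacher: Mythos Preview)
Your proof is correct and follows essentially the same approach as the paper: both use the criterion $(X-1)^3 \mid F$ iff $F(1)=F'(1)=F''(1)=0$, and both arrive at the system $z_1+z_3=0$, $z_1^2-z_1-2z_2+z_3^2-z_3=0$. Your write-up is slightly more explicit in justifying the derivative criterion over $\Z[X^{\pm}]$ via the $X^N$ shift, which the paper takes for granted.
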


\begin{proof}
    Note that $(X-1)^3 \mid f$ if and only if $f(1) = f'(1) = f''(1) = 0$, where $f'$ and $f''$ denote the first and second derivatives of $f$. Taking $f = X^{z_1} + X^{z_2} (1-X) + X^{z_3} + (X - 3)$, then $f(1) = f'(1) = f''(1) = 0$ is equivalent to
    \[
    0 = z_1 + z_3 = z_1^2 - z_1 - 2 z_2 + z_3^2 - z_3 = 0,
    \]
    which is equivalent to $z_2 = z_1^2, z_3 = - z_1$.
\end{proof}

\begin{lemma}\label{lem:sum}
    Suppose $z_1, z_2, z_3 \in \Z$. We have
    \[
    (X-1)^2 \mid X^{z_1} + X^{z_2} - X^{z_3} - 1
    \]
    if and only if $z_3 = z_1 + z_2$.
\end{lemma}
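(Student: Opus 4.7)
The plan is to mimic the proof of Lemma~\ref{lem:sqaure}, replacing the third-order vanishing criterion with a second-order one. Concretely, I will use the standard fact that for a Laurent polynomial $f \in \Z[X^{\pm}]$, divisibility $(X-1)^2 \mid f$ is equivalent to $f(1) = 0$ and $f'(1) = 0$. For Laurent polynomials this is justified by multiplying $f$ through by a sufficiently high power $X^N$ (which is a unit in $\Z[X^{\pm}]$ and is nonzero, together with its derivative being harmless under the assumption $f(1)=0$), so that the criterion reduces to the usual one for ordinary polynomials $\Z[X]$.

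Next I would apply this criterion to $f(X) = X^{z_1} + X^{z_2} - X^{z_3} - 1$. Evaluating at $X = 1$ gives $f(1) = 1 + 1 - 1 - 1 = 0$ unconditionally, so this condition imposes no constraint. Taking the formal derivative yields $f'(X) = z_1 X^{z_1-1} + z_2 X^{z_2-1} - z_3 X^{z_3-1}$, and therefore $f'(1) = z_1 + z_2 - z_3$. Hence $(X-1)^2 \mid f$ is equivalent to the single linear condition $z_1 + z_2 - z_3 = 0$, i.e.\ $z_3 = z_1 + z_2$, which is exactly the claim.

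The main (minor) obstacle is just to be careful that the derivative criterion carries over from $\Z[X]$ to $\Z[X^{\pm}]$ when some of the $z_i$ are negative; I will address this by noting that multiplication by the unit $X^N$ does not affect divisibility by $(X-1)^2$ and that the derivative of $X^N f$ at $1$ equals $N f(1) + f'(1) = f'(1)$ whenever $f(1)=0$. Beyond that the lemma is an immediate two-line calculation, entirely parallel to Lemma~\ref{lem:sqaure}.
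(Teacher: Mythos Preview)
Your proposal is correct and follows essentially the same approach as the paper: use the criterion $(X-1)^2 \mid f \iff f(1)=f'(1)=0$, observe $f(1)=0$ automatically, and compute $f'(1)=z_1+z_2-z_3$. Your additional care about justifying the derivative criterion for Laurent polynomials via multiplication by a unit $X^N$ is a nice touch that the paper omits.
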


\begin{proof}
    Note that $(X-1)^2 \mid f$ if and only if $f(1) = f'(1) = 0$. Taking $f = X^{z_1} + X^{z_2} - X^{z_3} - 1$, then $f(1) = f'(1) = 0$ is equivalent to
    $
    0 = z_1 + z_2 - z_3 = 0
    $.
\end{proof}

\begin{lemma}\label{lem:product}
    Suppose $z_i \in \Z, i = 1, 2, 3$. There exist $z_4, \ldots, z_{12} \in \Z$ satisfying
    \begin{align}
        (X-1)^3 & \mid X^{z_1} + X^{z_4} (1-X) + X^{z_{10}} + (X - 3), \label{eq:1}\\
        (X-1)^3 & \mid X^{z_2} + X^{z_5} (1-X) + X^{z_{11}} + (X - 3), \label{eq:2}\\
        (X-1)^2 & \mid X^{z_4} + X^{z_5} - X^{z_6} - 1, \label{eq:3} \\
        (X-1)^2 & \mid X^{z_1} + X^{z_2} - X^{z_7} - 1, \label{eq:4} \\
        (X-1)^3 & \mid X^{z_7} + X^{z_8} (1-X) + X^{z_{12}} + (X - 3), \label{eq:5}\\
        (X-1)^2 & \mid X^{z_3} + X^{z_3} - X^{z_9} - 1, \label{eq:6}\\
        (X-1)^2 & \mid X^{z_6} + X^{z_9} - X^{z_8} - 1, \label{eq:7}
    \end{align}
    if and only if $z_3 = z_1 z_2$.
\end{lemma}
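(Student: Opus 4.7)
The plan is to apply Lemmas~\ref{lem:sqaure} and~\ref{lem:sum} to each of the seven divisibility conditions in turn, which will force each auxiliary $z_i$ ($i \geq 4$) to take a unique value as a polynomial in $z_1, z_2, z_3$. Once all these values are computed, the last condition~\eqref{eq:7} will collapse to the single integer identity $z_3 = z_1 z_2$, after which the converse direction is immediate by exhibiting the values.

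More concretely, I would first use Lemma~\ref{lem:sqaure} on equations~\eqref{eq:1} and~\eqref{eq:2}, which force $z_4 = z_1^2$ and $z_5 = z_2^2$ (together with $z_{10} = -z_1$ and $z_{11} = -z_2$, which play no role in later equations). Next, Lemma~\ref{lem:sum} applied to~\eqref{eq:3} and~\eqref{eq:4} gives $z_6 = z_4 + z_5 = z_1^2 + z_2^2$ and $z_7 = z_1 + z_2$. Feeding $z_7 = z_1+z_2$ into~\eqref{eq:5} and applying Lemma~\ref{lem:sqaure} yields $z_8 = z_7^2 = (z_1+z_2)^2$ (and $z_{12} = -(z_1+z_2)$). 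Finally, Lemma~\ref{lem:sum} on~\eqref{eq:6} forces $z_9 = 2 z_3$, and then~\eqref{eq:7} via Lemma~\ref{lem:sum} becomes $z_8 = z_6 + z_9$, i.e.
\[
(z_1 + z_2)^2 = (z_1^2 + z_2^2) + 2 z_3,
\]
which simplifies to $z_3 = z_1 z_2$, as desired.

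For the converse, assuming $z_3 = z_1 z_2$, I would simply set $z_4 := z_1^2$, $z_5 := z_2^2$, $z_6 := z_1^2 + z_2^2$, $z_7 := z_1 + z_2$, $z_8 := (z_1+z_2)^2$, $z_9 := 2 z_1 z_2$, $z_{10} := -z_1$, $z_{11} := -z_2$, $z_{12} := -(z_1 + z_2)$, and check that each of \eqref{eq:1}--\eqref{eq:7} is satisfied by a direct invocation of the ``if'' direction of Lemmas~\ref{lem:sqaure} and~\ref{lem:sum}. There is no real obstacle here: the proof is essentially bookkeeping, and the only mildly creative step is recognizing that the seven divisibility conditions have been engineered precisely so that encoding ``squaring'' via~\eqref{eq:1}, \eqref{eq:2}, \eqref{eq:5} and ``addition'' via~\eqref{eq:3}, \eqref{eq:4}, \eqref{eq:6}, \eqref{eq:7} simulate the polarization identity $2 z_1 z_2 = (z_1+z_2)^2 - z_1^2 - z_2^2$.
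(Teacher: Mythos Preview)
Your proposal is correct and follows essentially the same approach as the paper: both apply Lemmas~\ref{lem:sqaure} and~\ref{lem:sum} to each divisibility condition in turn, extract the forced values $z_4 = z_1^2$, $z_5 = z_2^2$, $z_6 = z_1^2+z_2^2$, $z_7 = z_1+z_2$, $z_8 = (z_1+z_2)^2$, $z_9 = 2z_3$, and then read off $z_3 = z_1 z_2$ from the polarization identity. Your write-up is slightly more explicit about the converse direction and the underlying identity, but there is no substantive difference.
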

\begin{proof}
    We apply Lemmas~\ref{lem:sqaure} or \ref{lem:sum} to each of the equations~\eqref{eq:1}-\eqref{eq:7}.
    Equations~\eqref{eq:1}, \eqref{eq:2} and \eqref{eq:3} are equivalent to
    \[
    z_4 = z_1^2, z_5 = z_2^2, z_6 = x_4 + x_5 = z_1^2 + z_2^2,
    \]
    and $z_{10} = - z_1, z_{11} = - z_2$.
    Equations~\eqref{eq:4} and \eqref{eq:5} are equivalent to
    \[
    z_7 = z_1 + z_2, z_8 = z_7^2 = z_1^2 + z_2^2 + 2 z_1 z_2,
    \]
    and $z_{12} = - z_7 = - z_1 - z_2$.
    Finally, Equations~\eqref{eq:6} and \eqref{eq:7} are equivalent to
    \[
    z_9 = z_3 + z_3, z_8 = z_6 + z_9 = z_6 + z_3 + z_3.
    \]
    Together with $z_8 = z_1^2 + z_2^2 + 2 z_1 z_2$ and $z_6 = z_1^2 + z_2^2$, this yields $z_3 = z_1 z_2$.
\end{proof}

We now combine Lemma~\ref{lem:sum} and \ref{lem:product} to prove Proposition~\ref{prop:linequndec} (and consequently, Theorem~\ref{thm:linequndec}).

\begin{proof}[Proof of Proposition~\ref{prop:linequndec} and Theorem~\ref{thm:linequndec}]
    By Matiyasevich’s proof of the undecidability of Hilbert’s tenth problem~\cite{matiyasevich1993hilbert}, there exists a polynomial $P_{uni} \in \Z[X_1, \ldots, X_d]$, such that the following problem is undecidable:

    \smallskip
    \noindent \textbf{Input:} an integer $a \in \Z$.
    
    \noindent \textbf{Question:} whether $P_{uni}(x_1, \ldots, x_d) = a$ has a solution $(x_1, \ldots, x_d) \in \Z^d$.
    \smallskip

    To embed the Hilbert's tenth problem into Equation~\eqref{eq:sysmono}, we will construct $k, n \in \N$, polynomials $p_1, \ldots, p_{k-1} \in \{0, (X-1)^2, (X-1)^3\}$, $f_{ij} \in \Z[X^{\pm}]$, $i = 1, \ldots, n; \; j = 1, \ldots, k$, and $f_{0j} \in \Z[X^{\pm}]$, $j = 1, \ldots, k-1$, such that $P_{uni}(z_1, \ldots, z_d) = a$ if and only if there exist $z_{d+1}, \ldots, z_n \in \Z$ such that
    \begin{align}\label{eq:sysmono2}
        p_1 & \mid X^{z_1} f_{11} + \cdots + X^{z_n} f_{n1} - f_{01}, \nonumber\\
        p_2 & \mid X^{z_1} f_{12} + \cdots + X^{z_n} f_{n2} - f_{02}, \nonumber\\
        & \vdots \nonumber\\
        p_{k-1} & \mid X^{z_1} f_{1, k-1} + \cdots + X^{z_n} f_{n, k-1} - f_{0, k-1}, \nonumber\\
        0 & \mid X^{z_n} - X^a.
    \end{align}

    First we rewrite the equation $P_{uni}(z_1, \ldots, z_d) = z_n$ into a system of polynomial equations, each of the form $z_k = z_i z_j$ or $z_k = z_i + z_j$ or $z_k = b, b \in \Z$, with $i, j, k \in \{1, 2, \ldots, n\}$.
    For example, ``$z_1^3 + 2 z_1 z_2 + 7 = z_n$'' can be rewritten as ``$z_3 = z_1 z_1, z_4 = z_1 z_3, z_5 = z_1 z_2, z_6 = z_5 + z_5, z_7 = z_4 + z_6, z_8 = 7, z_n = z_7 + z_8$''.

    Then, we use Lemma~\ref{lem:product} and \ref{lem:sum} respectively to express the equations $z_k = z_i z_j$ and $z_k = z_i + z_j$, and use $0 \mid X^{z_k} - X^b$ to express $z_k = b$. Finally, we add $0 \mid X^{z_n} - X^a$ to the system to fully express the equation $P_{uni}(z_1, \ldots, z_d) = a$.
    The resulting system have the form~\eqref{eq:sysmono2} and is equivalent to $P_{uni}(z_1, \ldots, z_d) = a$ (the variables $z_{d+1}, \ldots, z_n$ will be functions over $z_1, \ldots, z_d$ and do not impose extra constraints).
    
    Notice that the input $a \in \Z$ only appear once as $X^a$ in the system~\eqref{eq:sysmono2}, which corresponds to $f_{0k} = X^a$ in the system~\eqref{eq:sysmono}. Therefore the decision problem stated in Proposition~\ref{prop:linequndec} is undecidable.
    As explained after the statement of Proposition~\ref{prop:linequndec}, the correctness of Theorem~\ref{thm:linequndec} directly follows.
\end{proof}

\section{Equations over abelian-by-cyclic groups}\label{sec:reduction}

\subsection{Quadratic equation undecidable in an abelian-by-cyclic group}\label{subsec:quadeq}
\hfill

\noindent In this subsection, we prove Theorem~\ref{thm:quadeq}, showing that solving quadratic equations is undecidable in the abelian-by-cyclic group $\mA \rtimes \Z$, where $\mA$ is the $\Z[X^{\pm}]$-module constructed in Theorem~\ref{thm:linequndec}.
Our strategy is a reduction from Theorem~\ref{thm:linequndec}, by embedding the equation $X^{z_1} \bff_1 + \cdots + X^{z_n} \bff_n = \bff_0$ into a quadratic equation over $\mA \rtimes \Z$.
The key observation is that ``multiplying by $X^z$'' can be expressed by conjugacy:
$
    (\ba, z) (\bff, 0) (\ba, z)^{-1} = (X^{z} \bff, 0)
$.

\thmquadeq*
\begin{proof}
    We will reduce the problem from Theorem~\ref{thm:linequndec}.
    Let $\mA$ be the finitely presented $\Z[X^{\pm}]$-module constructed in the theorem.
    Take $h_1 = (\bff_1, 0), h_2 = (\bff_2, 0) \ldots, h_n = (\bff_n, 0)$, where $\bff_1, \ldots, \bff_n \in \mA$ are the elements constructed in Theorem~\ref{thm:linequndec}.
    For an instance of the decision problem in Theorem~\ref{thm:linequndec} with input $\bff_0 \in \mA$,
    we take the input $h_0 = (- \bff_0, 0)$ in the above decision problem.
    We show that there exist $g_0, g_1, \ldots, g_n \in \mA \rtimes \Z$, such that
    \begin{equation}\label{eq:qe}
        (g_0 h_0 g_0^{-1}) (g_1 h_1 g_1^{-1}) \cdots (g_n h_n g_n^{-1}) = e,
    \end{equation}
    if and only if there exist $z_1, \ldots, z_n \in \Z$, such that
    \begin{equation}\label{eq:qelinmono}
         X^{z_1} \bff_1 + \cdots + X^{z_n} \bff_n = \bff_0.
    \end{equation}

    If there exist $g_0, g_1, \ldots, g_n \in \mA \rtimes \Z$ that satisfy \eqref{eq:qe}, write $g_0 = (\ba_0, b_0), g_1 = (\ba_1, b_1), \ldots$, $g_n = (\ba_n, b_n)$.
    Then for $i = 1, \ldots, n$, we have 
    \[
    g_i h_i g_i^{-1} = (\ba_i, b_i) (\bff_i, 0) (- X^{-b_i} \ba_i, - b_i) = (X^{b_i} \bff_i, 0),
    \]
    and $g_0 h_0 g_0^{-1} = (- X^{b_0} \bff_0, 0)$.
    Therefore, Equation~\eqref{eq:qe} is equivalent to
    \begin{align*}
        \left( - X^{b_0} \bff_0 + X^{b_1} \bff_1 + \cdots + X^{b_n} \bff_n, 0 \right) = \left(\bzer, 0 \right),
    \end{align*}
    which is equivalent to
    \[
    X^{b_1 - b_0} \bff_1 + \cdots + X^{b_n - b_0} \bff_n = \bff_0.
    \]
    Thus, the integers $z_1 = b_1 - b_0, z_2 = b_2 - b_0, \ldots, z_n = b_n - b_0$ satisfy Equation~\eqref{eq:qelinmono}.

    Conversely, if Equation~\eqref{eq:qelinmono} has solutions $z_1, \ldots, z_m \in \Z$, take $g_0 = (\bzer, 0), g_1 = (\bzer, z_1), \ldots, g_n = (\bzer, z_n) \in \mA \rtimes \Z$.
    Then as above we have $g_0 h_0 g_0^{-1} = (- \bff_0, 0)$ and $g_i h_i g_i^{-1} = (X^{z_i} \bff_i, 0)$ for $i = 1, \ldots, n$.
    Therefore $g_0, g_1, \ldots, g_n$ satisfy Equation~\eqref{eq:qe}.

    By Theorem~\ref{thm:linequndec}, it is undecidable whether Equation~\eqref{eq:qelinmono} has integer solutions. Therefore, it is also undecidable whether Equation~\eqref{eq:qe} has solutions in $\mA \rtimes \Z$.
\end{proof}

\subsection{General system of equations undecidable in $\Z \wr \Z$}\label{subsec:ZwrZ}
\hfill

\noindent Recall that $\Z \wr \Z$ is defined as $\Z[X^{\pm}] \rtimes \Z$. 
Its elements are $(f, a)$ where $f \in \Z[X^{\pm}], a \in \Z$.
In this subsection we show that solving a general system of equations is undecidable in $\Z \wr \Z$ (Theorem~\ref{thm:syseq}). 

The difficulty of proving Theorem~\ref{thm:syseq}, compared to Theorem~\ref{thm:quadeq}, is that we can no longer rely on the structure of the abelian-by-cyclic group to embed the information on the module $\mA$ from Theorem~\ref{thm:linequndec}.
We therefore need to dig deeper, and prove Theorem~\ref{thm:syseq} by reducing it from Proposition~\ref{prop:linequndec} instead of Theorem~\ref{thm:linequndec}.
Recall that Proposition~\ref{prop:linequndec} states it is undecidable whether a system of equations of the form $p_j \mid X^{z_1} f_{1j} + \cdots + X^{z_n} f_{nj} - f_{0j}, \; j = 1, \ldots, k,$ has solutions $z_1, \ldots, z_n \in \Z$.
Furthermore, the divisors $p_j$ have either the form $0$, $(X-1)^2$, or $(X-1)^3$.
The key idea of embedding these equations into equations over $\Z \wr\Z$ is that divisibility by $(X-1)^k$ can be expressed by an equation over $\Z \wr\Z$ containing the \emph{$k$-th commutator words}.
For an alphabet $\mY = \{x, y, z, w\}$, define recursively the following commutator words over $\mY \cup \mY^{-1}$:
\begin{align*}
    [x, y] & \coloneqq x^{-1} y^{-1} xy, \\
    [[x, y], z] & \coloneqq [x^{-1}y^{-1}xy, z] = y^{-1} x^{-1} y x z^{-1} x^{-1} y^{-1} xyz, \\
    [[[x, y], z], w] & \coloneqq [[x^{-1}y^{-1}xy, z], w] = z^{-1} y^{-1} x^{-1} y x z x^{-1} y^{-1} xy w^{-1} y^{-1} x^{-1} y x z^{-1} x^{-1} y^{-1} xyz w.
\end{align*}

\begin{lemma}\label{lem:comms}
Let $(f, b) \in \Z \wr \Z$, let $e = (0, 0)$ denote the neutral element of $\Z \wr \Z$.
\begin{enumerate}[nosep, label=(\arabic*)]
    \item There exists $x, y \in \Z \wr \Z$ such that $(f, b) \cdot [x, y] = e$, if and only if $(X-1) \mid f$ and $b = 0$.
    \item There exists $x, y, z \in \Z \wr \Z$ such that $(f, b) \cdot [[x, y], z] = e$, if and only if $(X-1)^2 \mid f$ and $b = 0$.
    \item There exists $x, y, z, w \in \Z \wr \Z$ such that $(f, b) \cdot [[[x, y], z], w] = e$, if and only if $(X-1)^3 \mid f$ and $b = 0$.
\end{enumerate}
\end{lemma}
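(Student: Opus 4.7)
The plan is to prove all three parts by a direct computation in $\Z \wr \Z = \Z[X^{\pm}] \rtimes \Z$, using the multiplication rule $(f, b) \cdot (f', b') = (f + X^b f', b + b')$ and inversion rule $(f, b)^{-1} = (-X^{-b} f, -b)$ from Section~\ref{sec:prelim}.

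First I would expand a single commutator. For $x = (f_1, b_1)$ and $y = (f_2, b_2)$, a mechanical computation gives
\[
[x, y] = \bigl(\,X^{-b_1-b_2}\bigl(f_2(X^{b_1} - 1) - f_1(X^{b_2} - 1)\bigr),\; 0\,\bigr).
\]
Two features are crucial: the second coordinate always vanishes, and since $(X-1) \mid X^{b_i} - 1$ for each $i$, the first coordinate is divisible by $(X-1)$. This already yields the ``only if'' direction of (1): the equation $(f, b) \cdot [x, y] = e$ forces $b = 0$ and $f$ to be the negative of a polynomial divisible by $(X-1)$.

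Next I would record an iteration lemma: for any $(g, 0)$ and $(f', b')$ in $\Z \wr \Z$,
\[
[(g, 0), (f', b')] = \bigl(g(X^{-b'} - 1),\, 0\bigr),
\]
again a one-line check. Applied with $(g, 0) = [x, y]$, this gives $[[x, y], z] = (g \cdot (X^{-b_z} - 1),\, 0)$, whose first coordinate acquires an extra factor of $(X-1)$; iterating once more yields $[[[x, y], z], w] = (g \cdot (X^{-b_z}-1)(X^{-b_w}-1),\, 0)$, divisible by $(X-1)^3$. This establishes the ``only if'' directions of (2) and (3).

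For the ``if'' directions I would give an explicit construction: given $f$ with $(X-1)^k \mid f$ (for $k = 1, 2, 3$), write $f = (X-1)^k q$ and take $x = ((-1)^{k-1} X^k q,\, 0)$ together with $y = z = w = (0, 1)$. Substituting these into the formulas above, the $k$-fold iterated commutator equals $(-f, 0)$, so $(f, 0)$ times this commutator is $e$. No step is a serious obstacle; the entire argument reduces to careful bookkeeping of signs and exponents, and the uniform choice $y = z = w = (0, 1)$ is made so that every commutator layer contributes exactly one factor $X^{-1} - 1 = -X^{-1}(X-1)$ to the first coordinate, cleanly accounting for both the sign $(-1)^{k-1}$ and the power $X^k$ appearing in $x$.
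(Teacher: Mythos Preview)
Your proposal is correct and follows essentially the same approach as the paper: compute the commutator explicitly, observe that its second coordinate vanishes and its first coordinate is divisible by $X-1$, then iterate via the formula $[(g,0),(f',b')]=(g(X^{-b'}-1),0)$. The only cosmetic difference is in the explicit witnesses for the ``if'' direction: the paper takes $x=(0,-1),\,y=\bigl(\tfrac{f}{X-1},0\bigr)$ and then $z=w=(0,-1)$, whereas you put the polynomial part in $x$ and use $y=z=w=(0,1)$; both choices give $[x,y,\dots]=(-f,0)$ after the appropriate number of iterations.
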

\begin{proof}
    (1) Write $x = (f_1, a_1), y = (f_2, a_2)$. By direct computation, 
    \[
    [x, y] = (f_1, a_1)^{-1} (f_2, a_2)^{-1} (f_1, a_1) (f_2, a_2) = ((1 - X^{a_2})X^{- a_1 - a_2} f_1 + (X^{a_1} - 1)X^{- a_1 - a_2} f_2, 0).
    \]
    If $(f, b) \cdot [x, y] = e$ for some $x = (f_1, a_1), y = (f_2, a_2)$, then $b = 0$ and $f = - (1 - X^{a_2})X^{- a_1 - a_2} f_1 - (X^{a_1} - 1)X^{- a_1 - a_2} f_2$. Since $X - 1 \mid 1 - X^{a}$ for all $a \in \Z$, we have $(X-1) \mid f$.
    Conversely, if $b = 0$ and $(X - 1) \mid f$, then one can take $x = (0, -1), y = (\frac{f}{X - 1}, 0)$, so we have $[x, y] = (-f, 0)$ and thus $(f, b) \cdot [x, y] = e$.

    (2) By (1), for any $x, y \in \Z \wr \Z$, we can write $[x, y]$ as $(F, 0)$ for some $(X - 1) \mid F$. And conversely for any $(X - 1) \mid F$ we can find $x, y \in \Z \wr \Z$ such that $[x, y] = (F, 0)$.
    Write $z = (f_3, a_3)$, then 
    \[
    [[x, y], z] = [(F, 0), (f_3, a_3)] = ((1 - X^{a_3})X^{- a_3} F, 0).
    \]
    Therefore, if there are $x, y, z \in \Z \wr \Z$ such that $(f, b) \cdot [[x, y], z] = e$, then $b = 0$ and $f = -(1 - X^{a_3})X^{- a_3} F$. Since $(X-1) \mid F$ and $(X-1) \mid 1 - X^{a_3}$, we have $(X-1)^2 \mid f$.
    Conversely, if $b = 0$ and $(X - 1)^2 \mid f$, then by (1) we can find $x, y \in \Z \wr \Z$ such that $[x, y] = (- \frac{f}{X-1}, 0)$. We then take $z = (0, -1)$, so $[[x, y], z] = (- f, 0)$ and $(f, b) \cdot [[x, y], z] = e$.

    (3) follows from (2) the same way (2) follows from (1).
\end{proof}

\thmsyseq*
\begin{proof}
    We will embed the decision problem from Proposition~\ref{prop:linequndec} into the problem of finding solutions to a system of equations in $\Z \wr \Z$.
    Let $k, n \in \N$,$p_1, \ldots, p_k \in \{0, (X-1)^2, (X-1)^3\}$, and $f_{ij} \in \Z[X^{\pm}]$, $i = 1, \ldots, n; \; j = 1, \ldots, k,$ be as defined in Proposition~\ref{prop:linequndec}.
    We will construct $k$ words $w_1, w_2, \ldots, w_k$ over $\mX \cup \mX^{-1} \cup G$, that has solution over $\Z \wr \Z$ if and only if the system of equations~\eqref{eq:sysmono} has integer solutions $z_1, \ldots, z_n$.

    For each $i = 1, \ldots, n,$ and $j = 1, \ldots, k$, define $h_{ij} \coloneqq (f_{ij}, 0)$, and $h_{0j} \coloneqq (- f_{0j}, 0)$.
    Let $\mX$ be the alphabet $\{x_0, x_1, \ldots, x_n\} \cup \{x_{11}, x_{12}, x_{13}, x_{14}\} \cup \cdots \cup \{x_{k1}, x_{k2}, x_{k3}, x_{k4}\}$.
    For each $j = 1, \ldots, k,$ define the word $w_j$ over the alphabet $\mX \cup \mX^{-1} \cup G$:
    \begin{align*}
        w_j \coloneqq
        \begin{cases}
            (x_0 h_{0j} x_0^{-1}) (x_1 h_{1j} x_1^{-1}) \cdots (x_n h_{nj} x_n^{-1}) \quad & \text{ if } p_j = 0, \\
            (x_0 h_{0j} x_0^{-1}) (x_1 h_{1j} x_1^{-1}) \cdots (x_n h_{nj} x_n^{-1}) [[x_{j1}, x_{j2}], x_{j3}] \quad & \text{ if } p_j = (X-1)^2, \\
            (x_0 h_{0j} x_0^{-1}) (x_1 h_{1j} x_1^{-1}) \cdots (x_n h_{nj} x_n^{-1}) [[[x_{j1}, x_{j2}], x_{j3}], x_{j4}] \quad & \text{ if } p_j = (X-1)^3.
        \end{cases}
    \end{align*}
    We claim that there exist $g_{j1}, g_{j2}, g_{j3}, g_{j4} \in \Z \wr \Z$, such that $g_0 = (F_0, b_0), g_1 = (F_1, b_1), \ldots, g_n = (F_n, b_n)$ satisfy $w_j(g_0, g_1, \ldots) = e$, if and only if $p_j \mid X^{b_1} f_{1j} + \cdots + X^{b_n} f_{nj} - X^{b_0} f_{0j}$.

    Indeed, as in the proof of Theorem~\ref{thm:quadeq}, we have
    \[
     (x_0 h_{0j} x_0^{-1}) (x_1 h_{1j} x_1^{-1}) \cdots (x_n h_{nj} x_n^{-1}) = \left( - X^{b_{0}} f_0 + X^{b_1} f_{1j} + \cdots + X^{b_n} f_{nj}, 0 \right)
    \]
    If $p_j = 0$, then $w_j(g_0, g_1, \ldots) = e$ if and only if $- X^{b_{0}} f_0 + X^{b_1} f_{1j} + \cdots + X^{b_n} f_{nj} = 0$.
    If $p_j = (X - 1)^2$, then by Lemma~\ref{lem:comms}(2), $w_j(g_0, g_1, \ldots) = e$ if and only if $p_j \mid - X^{b_{0}} f_0 + X^{b_1} f_{1j} + \cdots + X^{b_n} f_{nj}$.
    If $p_j = (X - 1)^3$ then Lemma~\ref{lem:comms}(3) gives the same result.

    Therefore, the system of $k$ word equations $w_1 , w_2, \ldots, w_k$, have solutions $g_0, g_1, \ldots, g_n$, $g_{11}, g_{12}$, $g_{13}, g_{14}$, $\ldots$, $g_{k1}, g_{k2}, g_{k3}, g_{k4} \in \Z \wr \Z$, if and only if the system of equations
    \begin{align}\label{eq:sysmono3}
        p_1 & \mid X^{b_1} f_{11} + \cdots + X^{b_n} f_{n1} - X^{b_0} f_{01}, \nonumber\\
        p_2 & \mid X^{b_1} f_{12} + \cdots + X^{b_n} f_{n2} - X^{b_0} f_{02}, \nonumber\\
        & \vdots \nonumber\\
        p_k & \mid X^{b_1} f_{1k} + \cdots + X^{b_n} f_{nk} - X^{b_0} f_{0k}.
    \end{align}
    admit solutions $b_0, b_1, \ldots, b_n \in \Z$.
    Note that this is equivalent to the system~\eqref{eq:sysmono} admitting solutions $z_1, \ldots, z_n \in \Z$, by de-homogenizing or homogenizing the equations.
    Therefore, Proposition~\ref{prop:linequndec} shows that it is undecidable whether a system of equations have solutions in $\Z \wr \Z$.
\end{proof}

\section{Knapsack Problem}\label{sec:KP}
In this section, we prove Theorem~\ref{thm:knapsack}, showing that the Knapsack Problem is undecidable in the abelian-by-cyclic group $\mA \rtimes \Z$, where $\mA$ is the $\Z[X^{\pm}]$-module constructed in Theorem~\ref{thm:linequndec}.
Undecidability of the Knapsack Problem in certain abelian-by-cyclic groups has already been observed by Mishchenko and Treier~\cite[Theorem~1]{mishchenko2017knapsack}.
Nevertheless, we deduce an alternative and more direct proof from Theorem~\ref{thm:linequndec}.
\thmknapsack*

\begin{proof}
    Take the integer $n \in \N$ and the $\Z[X^{\pm}]$-module $\mA$ from Theorem~\ref{thm:linequndec}.
    First, we show that in the abelian-by-cyclic group $\mA \rtimes \Z$, there exist $h_1, \ldots, h_n \in \mA \rtimes \Z$, such that the following ``integer'' variant of the Knapsack Problem is undecidable:

    \smallskip
    \noindent \textbf{Input:} an element $h_0 \in \mA \rtimes \Z$.
    
    \noindent \textbf{Question:} whether there exist $(b_1, \ldots, b_{n+1}) \in \Z^{n+1}$ that satisfy the following equation:
    \begin{equation}\label{eq:weakKP}
        (\bzer, b_1) \cdot h_1 \cdot (\bzer, b_2) \cdot h_2 \cdots (\bzer, b_n) \cdot h_n \cdot (\bzer, b_{n+1}) = h_0.
    \end{equation}
    \smallskip

    We will reduce the problem from Theorem~\ref{thm:linequndec}.
    Let $\mA$ be the finitely presented $\Z[X^{\pm}]$-module constructed in Theorem~\ref{thm:linequndec}.
    Take $h_1 = (\bff_1, 0), \ldots, h_n = (\bff_n, 0)$ where $\bff_1, \ldots, \bff_n \in \mA$ are defined in Theorem~\ref{thm:linequndec}.
    For an instance of the decision problem in Theorem~\ref{thm:linequndec} with input $\bff_0 \in \mA$,
    we take the input $h_0 = (\bff_0, 0)$ in the above decision problem.
    We show that Equation~\eqref{eq:weakKP} has a solution $(b_1, \ldots, b_{n+1}) \in \Z^{n+1}$ if and only if 
    \begin{equation}\label{eq:modeq}
        X^{z_1} \bff_1 + \cdots + X^{z_n} \bff_n = \bff_0.
    \end{equation}
    has a solution $(z_1, \ldots, z_n) \in \Z^n$.
    
    Indeed, direct computation shows that Equation~\eqref{eq:weakKP} is equivalent to 
    \begin{equation}\label{eq:weakKPequiv}
    \left(X^{b_1} \bff_1 + X^{b_1 + b_2} \bff_2 + \cdots + X^{b_1 + b_2 + \cdots + b_n} \bff_n, b_1 + \cdots + b_n + b_{n+1} \right) = \left(\bff_0, 0 \right).
    \end{equation}
    If this has a solution $(b_1, \ldots, b_{n+1}) \in \Z^{n+1}$, then $z_1 \coloneqq b_1, z_2 \coloneqq b_1 + b_2, \ldots, z_n \coloneqq b_1 + \cdots + b_n$, is a solution for \eqref{eq:modeq}.
    Conversely, if $(z_1, \ldots, z_n) \in \Z^n$ is a solution for \eqref{eq:modeq}, then $b_1 \coloneqq z_1, b_2 \coloneqq z_2 - z_1, \ldots, b_n \coloneqq z_n - z_{n-1}, b_{n+1} \coloneqq -z_n$, is a solution for \eqref{eq:weakKPequiv}.
    By Theorem~\ref{thm:linequndec}, it is undecidable whether Equation~\eqref{eq:modeq} has integer solutions. Therefore, it is also whether Equation~\eqref{eq:weakKP} (equivalently, \eqref{eq:weakKPequiv}) has solutions $(b_1, \ldots, b_{n+1}) \in \Z^{n+1}$.

    We then prove Theorem~\ref{thm:knapsack}.
    Equation~\eqref{eq:weakKP} can be rewritten as
    \begin{equation}\label{eq:intKP}
    g_1^{b_1} g_2^{b_2} \cdots g_{n+1}^{b_{n+1}} = g,
    \end{equation}
    where
    \begin{equation*}
    g_1 = (\bzer, 1), \; g_2 = h_1 (\bzer, 1) h_1^{-1}, \; g_3 = h_1 h_2 (\bzer, 1) h_2^{-1} h_1^{-1}, \ldots, \;
    g_{n+1} = h_1 h_2 \cdots h_n (\bzer, 1) h_n^{-1} \cdots h_2^{-1} h_1^{-1},
    \end{equation*}
    and $g = h_0 h_n^{-1} \cdots h_2^{-1} h_1^{-1}$.
    Finally, Equation~\eqref{eq:intKP} has integer solutions $b_1, \ldots, b_{n+1} \in \Z$ if and only if the knapsack equation
    \begin{equation}\label{eq:nKP}
    g_1^{z_1} \left(g_1^{-1}\right)^{z'_1} g_2^{z_2} \left(g_2^{-1}\right)^{z'_2} \cdots g_{n+1}^{z_{n+1}} \left(g_{n+1}^{-1}\right)^{z'_{n+1}} = g
    \end{equation}
    has \emph{non-negative} integer solutions $z_1, z'_1, z_2, z'_2, \ldots, z_{n+1}, z'_{n+1} \in \N$.
    Therefore, it is undecidable for the input $g$, whether Equation~\eqref{eq:nKP} has solutions $z_1, z'_1, z_2, z'_2, \ldots, z_{n+1}, z'_{n+1} \in \N$.
\end{proof}

\section{Coset Intersection}\label{sec:inter}
In this section we show that Coset Intersection is decidable in all finitely generated abelian-by-cyclic groups:

\thmcoset*

For reference, an overview of our decision procedure for Theorem~\ref{thm:coset} is given in Algorithm~\ref{alg:cosetinter}.
Our solution depends on effective computation in finitely presented modules over polynomial rings.
We will make use of the following classic results.

\begin{lemma}[{\cite[Lemma~2.1,~2.2]{baumslag1981computable}}]\label{lem:classicdec}
    Let $\mA$ be a $\Z[X^{\pm}]$-module with a given finite presentation.
    The following problems are effectively solvable:
    \begin{enumerate}[nosep, label = (\roman*)]
        \item \textit{(Submodule Membership)} Given elements $\ba_1, \ldots, \ba_k, \ba \in \mA$, decide whether $\ba$ is in the submodule generated by $\ba_1, \ldots, \ba_k$.
        \item \textit{(Computing Syzygies)} Given elements $\ba_1, \ldots, \ba_k \in \mA$, compute a finite set of generators for the \emph{Syzygy module} $S \subseteq \Z[X^{\pm}]^k$:
        \[
        S \coloneqq \left\{(f_1, \ldots, f_k) \in \Z[X^{\pm}]^k \;\middle|\; f_1 \cdot \ba_1 + \cdots + f_k \cdot \ba_k = \bzer \right\}.
        \]
        \setcounter{DecideCounter}{\value{enumi}} 
    \end{enumerate}
\end{lemma}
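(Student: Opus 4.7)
The plan is to reduce both tasks to Gröbner basis computations in the free $\Z[X^{\pm}]$-module $\Z[X^{\pm}]^d$, and then appeal to effective Gröbner basis theory over polynomial rings with $\Z$ coefficients. By hypothesis, $\mA$ is given as $\Z[X^{\pm}]^d/Q$ with $Q = \langle \bq_1, \ldots, \bq_m \rangle$ explicit, and every element of $\mA$ is represented by a vector in $\Z[X^{\pm}]^d$ (unique modulo $Q$).

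For part (i), the key observation is that $\ba \in \langle \ba_1, \ldots, \ba_k \rangle$ in $\mA$ holds if and only if $\ba \in \langle \ba_1, \ldots, \ba_k, \bq_1, \ldots, \bq_m \rangle$ in the free module $\Z[X^{\pm}]^d$. One then computes a Gröbner basis $G$ of this enlarged submodule with respect to a fixed monomial order on the free module (e.g.\ position-over-term), reduces $\ba$ modulo $G$, and checks whether the normal form is zero. For part (ii), the syzygy module of $\ba_1, \ldots, \ba_k$ in $\mA$ coincides with the projection onto the first $k$ coordinates of the syzygy module of the tuple $(\ba_1, \ldots, \ba_k, \bq_1, \ldots, \bq_m)$ in $\Z[X^{\pm}]^d$. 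Schreyer's algorithm, applied to a Gröbner basis of the latter, yields a finite generating set for that syzygy module together with an explicit formula for each generator; projecting these onto the first $k$ coordinates gives generators of $S$.

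The main technical point is to justify Gröbner bases over $\Z[X^{\pm}]$, since the coefficient ring $\Z$ is not a field and the indeterminate $X$ is invertible. The invertibility of $X$ is handled by a preliminary rescaling: multiplying each input vector by a sufficiently high power $X^N$ clears the negative exponents and moves the problem into $\Z[X]^d$; both submodule membership and a chosen set of syzygy generators transfer back to $\Z[X^{\pm}]^d$ by undoing the rescaling. What remains is effective Gröbner basis theory over the polynomial ring $\Z[X]$ (i.e.\ over a PID), which is the classical extension of Buchberger's algorithm to Noetherian coefficient rings (Kandri-Rody--Kapur and others), where $S$-polynomials incorporate Bézout-style reductions of leading coefficients in $\Z$. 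Termination follows from the ascending chain condition on leading-term submodules of $\Z[X]^d$, and correctness from the adapted Buchberger criterion; together these deliver the two algorithms. I expect the adaptation of Buchberger/Schreyer to $\Z$-coefficients to be the only nonroutine ingredient, and this is precisely the content of the cited result of Baumslag, Cannonito and Miller.
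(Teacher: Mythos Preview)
The paper does not give its own proof of this lemma: it is stated as a citation of Baumslag, Cannonito and Miller and used as a black box, so there is no argument in the paper to compare your proposal against. Your outline via Gr\"obner bases over $\Z[X]$ together with Schreyer's construction for syzygies is the standard route and is essentially what underlies the cited result; in that sense your proposal is appropriate.

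One point deserves a bit more care. Your passage from $\Z[X^{\pm}]$ to $\Z[X]$ by ``multiplying by $X^N$ to clear negative exponents'' moves the \emph{data} into $\Z[X]^d$, but it does not by itself turn the $\Z[X^{\pm}]$-submodule generated by the $\bb_i$ into the $\Z[X]$-submodule generated by the same vectors: the former, intersected with $\Z[X]^d$, is the \emph{saturation} of the latter with respect to $X$. So either you should compute that saturation (which is again a Gr\"obner basis computation, e.g.\ via the colon module or the Rabinowitsch trick), or, more cleanly, replace $\Z[X^{\pm}]$ by $\Z[X,Y]/(XY-1)$ and run the Gr\"obner/Schreyer machinery in the two-variable polynomial ring over $\Z$. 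Either fix is routine and does not affect the overall strategy.
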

In particular, for Lemma~\ref{lem:classicdec}(ii), it states that one can compute the generators for the solution set of any homogeneous linear equation.

The following lemma shows we can effectively compute the intersection of a submodule of $\Z[X^{\pm}]^k$ with $\Z^k$.

\begin{lemma}[{\cite[Corollary~2.5(2)]{baumslag1981computable}}]\label{lem:decinterZ}
    Suppose we are given $k \in \N$ and elements $\bg_1, \ldots, \bg_n$ of the $\Z[X^{\pm}]$-module $\Z[X^{\pm}]^k$.
    Let $\mM$ denote the $\Z[X^{\pm}]$-module generated by $\bg_1, \ldots, \bg_n$, and define $\Lambda \coloneqq \mM \cap \Z^k$.
    Then $\Lambda \subseteq \Z^k$ is a $\Z$-module, and a finite set of generators for $\Lambda$ can be effectively computed.
\end{lemma}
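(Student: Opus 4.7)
The plan is to reduce the problem to finite-dimensional linear algebra over $\Z$, by first passing to the polynomial ring $\Z[X]$ and then extracting the integer intersection via an effective degree bound. Since $X$ is a unit in $\Z[X^{\pm}]$, we may replace each $\bg_i$ by $X^{m_i} \bg_i$ for a suitable $m_i \geq 0$ to clear negative exponents, so without loss of generality $\bg_i \in \Z[X]^k$. Letting $N \subseteq \Z[X]^k$ denote the $\Z[X]$-submodule they generate, a direct verification shows
\[
\mM \cap \Z[X]^k \;=\; N^{\mathrm{sat}} \;\coloneqq\; \{\bv \in \Z[X]^k \mid X^m \bv \in N \text{ for some } m \geq 0\}.
\]
Indeed, if $\bv = \sum f_i \bg_i$ with $f_i \in \Z[X^{\pm}]$, then multiplying by a common large power of $X$ puts $X^m \bv$ in $N$; conversely $\bv = X^{-m}(X^m \bv) \in \mM$. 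Hence $\Lambda = N^{\mathrm{sat}} \cap \Z^k$.

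Next, $N^{\mathrm{sat}}$ is itself effectively computable. Set $N_0 = N$ and iterate $N_{j+1} \coloneqq \{\bv \in \Z[X]^k \mid X\bv \in N_j\}$. The ascending chain $N_0 \subseteq N_1 \subseteq \cdots$ stabilizes after finitely many steps by Noetherianity of $\Z[X]^k$, with stable value $N^{\mathrm{sat}}$; stabilization is detectable via Lemma~\ref{lem:classicdec}(i). Each successor is computable by Lemma~\ref{lem:classicdec}(ii): if $\boldsymbol{h}_1, \ldots, \boldsymbol{h}_r$ generate $N_j$, then $\bv \in N_{j+1}$ iff $(\bv, -q_1, \ldots, -q_r)$ is a syzygy of $(Xe_1, \ldots, Xe_k, \boldsymbol{h}_1, \ldots, \boldsymbol{h}_r)$ for some $q_i \in \Z[X]$; projecting the computed syzygy module onto its first $k$ coordinates yields $\Z[X]$-generators of $N_{j+1}$.

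Given a $\Z[X]$-generating set $\boldsymbol{h}_1, \ldots, \boldsymbol{h}_s \in \Z[X]^k$ of $N^{\mathrm{sat}}$, the final step is to compute $N^{\mathrm{sat}} \cap \Z^k$ by finite $\Z$-linear algebra. Using a Gr\"obner basis of $N^{\mathrm{sat}}$ over $\Z[X]$, one extracts an effective degree bound $D$ with the property that every $\ba \in N^{\mathrm{sat}} \cap \Z^k$ admits a representation $\ba = \sum p_i \boldsymbol{h}_i$ with $\deg p_i \leq D$. Writing each $p_i = \sum_{t=0}^{D} c_{i,t} X^t$ with $c_{i,t} \in \Z$, the condition $\sum p_i \boldsymbol{h}_i \in \Z^k$ becomes a finite system of $\Z$-linear equations on the unknowns $c_{i,t}$: the coefficient of $X^r$ in $\sum p_i \boldsymbol{h}_i$ must vanish for each $r \geq 1$ (only finitely many such constraints are nontrivial). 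Solving by Hermite or Smith normal form, the constant coefficients of the resulting sums yield a finite $\Z$-generating set for $\Lambda$.

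The main obstacle is extracting the effective degree bound $D$ in the last step, which requires $\Z$-coefficient (rather than field-coefficient) Gr\"obner basis theory. This is more delicate than the classical setting since $\Z$ is not a field, but well-established frameworks such as strong Gr\"obner bases over $\Z[X]$ supply explicit bounds, and the required effective computations are fundamentally of the same nature as those in Lemma~\ref{lem:classicdec}.
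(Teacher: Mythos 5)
The paper does not prove this lemma at all: it is imported as a black box from Baumslag--Cannonito--Miller, Corollary~2.5(2), so there is no in-paper argument to compare against and your proposal is a genuinely self-contained alternative. Its skeleton is sound. Clearing negative exponents does not change $\mM$ because $X$ is a unit; the identification $\mM \cap \Z[X]^k = N^{\mathrm{sat}}$ is verified correctly; and the saturation chain $N_{j+1} = \{\bv \mid X\bv \in N_j\}$ is computable by the syzygy trick you describe, with stabilization correctly detectable because $N_j = N_{j+1}$ forces $N_{j'} = N_j$ for all $j' \ge j$ (worth saying explicitly, since for a general ascending chain one coincidence proves nothing). A minor presentational point: you invoke Lemma~\ref{lem:classicdec} over $\Z[X]$ although it is stated for $\Z[X^{\pm}]$-modules; this is harmless (the underlying results hold over any finitely generated commutative ring, as the paper itself notes), but should be flagged.

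The step that carries the real weight is the last one, and you correctly identify it: one needs a \emph{uniform} degree bound $D$ valid for every $\ba \in N^{\mathrm{sat}} \cap \Z^k$ simultaneously, not merely a membership test, precisely because the increasing chain of lattices ``representable with multipliers of degree $\le D$'' cannot be stopped by comparing consecutive terms. Such a bound does exist and can be obtained more cleanly than via a generic Hermann-type estimate: compute a strong Gr\"{o}bner basis $G$ of $N^{\mathrm{sat}}$ over $\Z[X]$ with respect to a degree-compatible (term-over-position) module order. The standard-representation property of strong Gr\"{o}bner bases over a Euclidean domain then writes any constant vector $\ba \in N^{\mathrm{sat}}$ as $\ba = \sum_i p_i g_i$ with the leading monomial of each $p_i g_i$ not exceeding that of $\ba$, which has degree $0$; hence every $p_i g_i$ is constant and $\Lambda$ is just the $\Z$-span of $G \cap \Z^k$. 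In other words, after replacing your generators by $G$ you may take $D = 0$ and the Hermite/Smith step collapses to reading off the constant basis elements. With that substitution (or with an explicit citation for effective degree bounds over $\Z[X]$) your proof is complete.
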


Let $d \geq 1$ be a positive integer. One can define the Laurent polynomial ring over the variable $X^d$:
\[
\Z[X^{\pm d}] \coloneqq \left\{\sum_{i = p}^q a_{di} X^{di} \in \Z[X^{\pm}] \;\middle|\; p, q \in \Z, a_{dp}, \ldots, a_{dq} \in \Z \right\}.
\]
The elements of $\Z[X^{\pm d}]$ are Laurent polynomials in $\Z[X^{\pm}]$ whose monomials have degrees divisible by $d$.
For any $d \geq 1$, the ring $\Z[X^{\pm}]$ is naturally a $\Z[X^{\pm d}]$-module generated by the elements $1, X, \ldots, X^{d-1}$, and $\Z[X^{\pm}]$ is isomorphic as a $\Z[X^{\pm d}]$-module to $\Z[X^{\pm d}]^{d}$.
Any finitely presented $\Z[X^{\pm}]$-module can be considered as a finitely presented $\Z[X^{\pm d}]$-module:

\begin{restatable}{lemma}{lemchangebase}\label{lem:changebase}
    Let $d \geq 2$.
    Given a finite presentation of a $\Z[X^{\pm}]$-module $\mA$, one can compute a finite presentation of $\mA$ as a $\Z[X^{\pm d}]$-module.
    Furthermore, let $\ba \in \mA$ be given in the finite presentation of $\mA$ as $\Z[X^{\pm}]$-module, then one can compute the representation of $\ba$ in $\mA$ considered as a $\Z[X^{\pm d}]$-module.
\end{restatable}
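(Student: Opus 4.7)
The plan is to exploit the fact that $\Z[X^{\pm}]$ is a free $\Z[X^{\pm d}]$-module of rank $d$ with basis $\{1, X, \ldots, X^{d-1}\}$: every Laurent polynomial $f = \sum_i a_i X^i$ decomposes uniquely as $f = \sum_{\ell=0}^{d-1} X^\ell f_\ell(X^d)$, where $f_\ell(X^d) = \sum_{i \equiv \ell \pmod d} a_i X^{i-\ell} \in \Z[X^{\pm d}]$. Applying this coordinatewise gives an effectively computable isomorphism of $\Z[X^{\pm d}]$-modules
\[
\phi \colon \Z[X^{\pm}]^k \;\longrightarrow\; \Z[X^{\pm d}]^{kd},
\]
sending $(f^{(1)}, \ldots, f^{(k)})$ to the tuple of $kd$ polynomials $(f^{(i)}_\ell)_{1 \le i \le k,\, 0 \le \ell \le d-1}$ obtained by extracting each coordinate modulo $d$.

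Next, suppose the given finite presentation writes $\mA = M/N$ with $M, N \subseteq \Z[X^{\pm}]^k$ generated as $\Z[X^{\pm}]$-submodules by finite lists $\bm_1, \ldots, \bm_s$ and $\bn_1, \ldots, \bn_t$, respectively. I would take $M' \coloneqq \phi(M)$ and $N' \coloneqq \phi(N)$; since $\phi$ is a $\Z[X^{\pm d}]$-linear bijection, $M'/N' \cong \mA$ as $\Z[X^{\pm d}]$-modules. The key observation needed to exhibit finitely many generators is that for any $\bv \in \Z[X^{\pm}]^k$, the $\Z[X^{\pm}]$-submodule $\Z[X^{\pm}] \cdot \bv$ coincides with the $\Z[X^{\pm d}]$-submodule spanned by $\{X^\ell \bv : 0 \le \ell \le d-1\}$, since every $\Z[X^{\pm}]$-coefficient expands as a $\Z[X^{\pm d}]$-linear combination of $1, X, \ldots, X^{d-1}$. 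Hence I can output the $sd$ elements $\{\phi(X^\ell \bm_j)\}_{j,\ell}$ as generators of $M'$ over $\Z[X^{\pm d}]$, and analogously the $td$ elements $\{\phi(X^\ell \bn_i)\}_{i,\ell}$ as generators of $N'$; both sets are directly computable from the input. For the second assertion, if $\ba \in \mA$ is represented by $\bm \in M$ in the original presentation, then its representation in $M'/N'$ is simply $\phi(\bm) \in M'$, again obtained by the mod-$d$ decomposition on each coordinate.

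I do not anticipate a significant obstacle here: the whole argument reduces to the freeness of $\Z[X^{\pm}]$ as a $\Z[X^{\pm d}]$-module, together with routine bookkeeping of generators. No deeper algorithmic machinery (such as Gr\"{o}bner bases or the syzygy computation of Lemma~\ref{lem:classicdec}) is required; the only care needed is to verify that multiplication by a $\Z[X^{\pm}]$-scalar is faithfully mimicked by the $d$ shifts $X^0, X^1, \ldots, X^{d-1}$ acting on the existing generators, which is immediate from the free-module identification above.
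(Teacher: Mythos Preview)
Your proposal is correct and follows essentially the same approach as the paper: both use the free decomposition $\Z[X^{\pm}] \cong \bigoplus_{\ell=0}^{d-1} X^\ell \,\Z[X^{\pm d}]$ coordinatewise to produce an explicit $\Z[X^{\pm d}]$-linear isomorphism $\Z[X^{\pm}]^k \to \Z[X^{\pm d}]^{kd}$ and then transport the presentation through it. If anything, your version is slightly more careful than the paper's, since you explicitly note that each $\Z[X^{\pm}]$-generator $\bm_j$ must be replaced by the $d$ shifted generators $X^\ell \bm_j$ before applying $\phi$, a point the paper's one-line justification (``simply applying $\varphi$ to the generators of $M$'') leaves implicit.
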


Let $\mA$ be a $\Z[X^{\pm}]$-module given by a finite presentation.
Recall that we naturally identify $\mA$ with the subgroup $\left\{ (\ba, 0) \;\middle|\; \ba \in \mA\right\}$ of $\mA \rtimes \Z$; that is, we will sometimes write $\ba$ instead of $(\ba, 0)$ when the context is clear.
The following lemma effectively describes finitely generated subgroups of $\mA \rtimes \Z$.
Such a description follows from the general description of subgroups of finitely generated metabelian groups~\cite[proof of Theorem~1]{romanovskii1974some}.
Here we give a systematic reformulation in the context of abelian-by-cyclic groups.
\begin{restatable}[Structural theorem of abelian-by-cyclic groups, see also~\cite{romanovskii1974some}]{lemma}{lemstruct}\label{lem:struct}
    Let $\gen{\mG}$ be a subgroup of $\mA \rtimes \Z$ generated by the elements $\mG = \{g_1 \coloneqq (\ba_1, z_1), \ldots, g_K \coloneqq (\ba_K, z_K)\}$.
    Then
    \begin{enumerate}[nosep, label = (\roman*)]
        \item If $z_1 = \cdots = z_K = 0$, then $\gen{\mG}$ is contained in $\mA$ and it is the $\Z$-module generated by $\ba_1, \ldots, \ba_K$.
        \item If $z_1, \ldots, z_K$ are not all zero, then $\gen{\mG} \not\subset \mA$.
        Let $d \in \N$ denote the greatest common divisor of $z_1, \ldots, z_K$. Consider the lattice \[
        \Lambda \coloneqq \left\{(s_1, \ldots, s_K) \in \Z^K \;\middle|\; s_1 z_1 + \cdots + s_K z_K = 0\right\}.
        \]
        Let $(s_{11}, \ldots, s_{1K}), \ldots, (s_{T1}, \ldots, s_{TK})$ be a finite set of generators for $\Lambda$.
        Then $\gen{\mG} \cap \mA$ is a $\Z[X^{\pm d}]$-submodule of $\mA$, generated by the set of elements
        \begin{equation}\label{eq:genmod}
        S \coloneqq \left\{g_i g_j g_i^{-1} g_j^{-1} \;\middle|\; 1 \leq i < j \leq K \right\} \cup \left\{g_1^{s_{i1}} \cdots g_K^{s_{iK}} \;\middle|\; i \in [1, T] \right\}.
        \end{equation}
        \item In case (ii), let $\ba \in \mA$ be any element such that $(\ba, d) \in \gen{\mG}$.
        Then $\gen{\mG}$ is generated by $\gen{\mG} \cap \mA$ and $(\ba, d)$ as a group.
        In other words, every element of $\gen{\mG}$ can be written as $(\bb, 0) \cdot (\ba, d)^m$ for some $\bb \in \gen{\mG} \cap \mA$ and $m \in \Z$.
    \end{enumerate}
\end{restatable}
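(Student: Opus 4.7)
The plan is to prove the three parts in sequence, with part (ii) carrying essentially all of the content. For part (i), when $z_1 = \cdots = z_K = 0$, the semidirect product law $(\ba, 0) \cdot (\ba', 0) = (\ba + \ba', 0)$ immediately places $\gen{\mG}$ inside $\mA$ and identifies it with the $\Z$-linear span $\Z\ba_1 + \cdots + \Z\ba_K$. For part (ii), my first step is to identify $\gen{\mG} \cap \mA$ as the kernel of the restriction of the projection $\pi \colon \mA \rtimes \Z \to \Z$ to $\gen{\mG}$; the image of this restriction is the subgroup generated by $z_1, \ldots, z_K$, which equals $d\Z$ by B\'ezout. A direct semidirect-product computation gives $(\bc, z)(\bb, 0)(\bc, z)^{-1} = (X^z \bb, 0)$, so conjugation by $g_i$ acts on $\gen{\mG} \cap \mA$ by multiplication by $X^{z_i}$, and composing exponents via B\'ezout yields the action of $X^{\pm d}$. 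This equips $\gen{\mG} \cap \mA$ with the claimed $\Z[X^{\pm d}]$-module structure.

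The core step is proving that the $\Z[X^{\pm d}]$-submodule $N$ generated by $S$ equals $\gen{\mG} \cap \mA$. The inclusion $N \subseteq \gen{\mG} \cap \mA$ is straightforward by inspecting second coordinates: every commutator $g_i g_j g_i^{-1} g_j^{-1}$ has second coordinate $0$, and each $g_1^{s_{i1}} \cdots g_K^{s_{iK}}$ has second coordinate $\sum_j s_{ij} z_j = 0$ by the defining property of $\Lambda$. For the reverse inclusion, I would first verify that $N$ is in fact a \emph{normal} subgroup of $\gen{\mG}$: elements of $N$ commute with one another (as they lie in the abelian group $\mA$), and conjugation by each $g_i$ preserves $N$ by construction of the $\Z[X^{\pm d}]$-action. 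The quotient $\gen{\mG}/N$ is then abelian (because every $[g_i, g_j]$ lies in $S$) and in fact a quotient of $\Z^K/\Lambda$ via $(n_1, \ldots, n_K) \mapsto g_1^{n_1} \cdots g_K^{n_K} \cdot N$. The map $\Z^K/\Lambda \to d\Z$, $(s_1, \ldots, s_K) \mapsto \sum_j s_j z_j$, is an isomorphism by definition of $\Lambda$ and $d$, and it factors through $\gen{\mG}/N \twoheadrightarrow \gen{\mG}/(\gen{\mG} \cap \mA) \cong d\Z$; hence both surjections are isomorphisms, forcing $N = \gen{\mG} \cap \mA$ as required.

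For part (iii), given any $g = (\bc, e) \in \gen{\mG}$ we have $e = md$ for some $m \in \Z$, and writing $(\ba, d)^m = (\ba^{(m)}, md)$ a direct semidirect-product calculation gives $g \cdot (\ba, d)^{-m} = (\bc - \ba^{(m)}, 0)$, which lies in $\gen{\mG} \cap \mA$. This yields the desired normal form $g = (\bb, 0) \cdot (\ba, d)^m$ with $\bb \in \gen{\mG} \cap \mA$, and in particular shows that $\gen{\mG}$ is generated by $\gen{\mG} \cap \mA$ together with $(\ba, d)$. The main obstacle in the whole proof is the normal-closure-equals-$\Z[X^{\pm d}]$-module computation in the middle paragraph: one must simultaneously verify that $N$ really is normal in $\gen{\mG}$ and then chase the two surjections $\gen{\mG}/N \twoheadrightarrow \gen{\mG}/(\gen{\mG} \cap \mA) \cong d\Z$ carefully enough to conclude that both are isomorphisms. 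Everything else is essentially bookkeeping with the semidirect product law.
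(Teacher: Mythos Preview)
Your proof is correct, and for part~(ii) it takes a genuinely different route from the paper. The paper argues combinatorially: having shown that $\gen{\mG}\cap\mA$ is a $\Z[X^{\pm d}]$-module, it takes an arbitrary word $g_{i_1}^{\epsilon_1}\cdots g_{i_n}^{\epsilon_n}\in\gen{\mG}\cap\mA$ and, using the commutator generators in $S$, permutes the letters (modulo the module generated by $S$) until the word has the shape $g_1^{s_1}\cdots g_K^{s_K}$ with $(s_1,\ldots,s_K)\in\Lambda$; it then expresses this as a product of the lattice-generator words in $S$. Your argument is more structural: you verify that the $\Z[X^{\pm d}]$-module $N$ generated by $S$ is normal in $\gen{\mG}$, observe that $\gen{\mG}/N$ is abelian and hence receives a surjection from $\Z^K/\Lambda$, and then sandwich $\gen{\mG}/N$ between two copies of $\Z$ via
\[
\Z^K/\Lambda \;\twoheadrightarrow\; \gen{\mG}/N \;\twoheadrightarrow\; \gen{\mG}/(\gen{\mG}\cap\mA)\;\cong\; d\Z,
\]
with the composite being the tautological isomorphism; this forces the middle surjection to be an isomorphism and hence $N=\gen{\mG}\cap\mA$. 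Your approach is cleaner and avoids the word-rewriting bookkeeping, while the paper's approach is more explicit about how an arbitrary element of $\gen{\mG}\cap\mA$ is actually expressed in terms of $S$. Parts~(i) and~(iii) are handled identically in both.
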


We point out that in case~(ii), the subgroup $\gen{\mG} \cap \mA$ is finitely generated as a $\Z[X^{\pm d}]$-module; but it is not necessarily finitely generated as a group.

\begin{example}
    Let $\mA = \Z[X^{\pm}]$, considered as a $\Z[X^{\pm}]$-module.
    Let $\gen{\mG}$ be the subgroup of $\mA \rtimes \Z$ generated by the elements $g_1 = (X, 4), g_2 = (1+X, -6)$.
    Then $d = 2$, and $\gen{\mG} \cap \mA$ is the $\Z[X^{\pm 2}]$-module generated by the elements
    \begin{equation*}
    g_1 g_2 g_1^{-1} g_2^{-1} = (X^5 + X^4 - 1 - X^{-5}, 0), \quad
    g_1^3 g_2^2 = (X^{13} + X^{12} + X^{9} + X^{7} + X^{6} + X^{5} + X, 0).
    \end{equation*}

    For example, consider the element $g_1^2 g_2 g_1 g_2 \in \gen{\mG}$.
    By direct computation, its second entry is zero, therefore $g_1^2 g_2 g_1 g_2 \in \gen{\mG} \cap \mA$.
    Furthermore, $g_1^2 g_2 g_1 g_2$ can be written as 
    \begin{multline*}
        g_1^2 g_2 g_1 g_2 = g_1^2 (g_2 g_1) g_2 = g_1^2 (g_2 g_1 g_2^{-1} g_1^{-1}) (g_1 g_2) g_2 = g_1^2 (g_1 g_2 g_1^{-1} g_2^{-1})^{-1} g_1 g_2^2 \\
        = g_1^2 (g_1 g_2^2) (g_1 g_2^2)^{-1} (g_1 g_2 g_1^{-1} g_2^{-1})^{-1} (g_1 g_2^2) = g_1^3 g_2^2 \cdot (g_1 g_2^2)^{-1} (g_1 g_2 g_1^{-1} g_2^{-1})^{-1} (g_1 g_2^2) \\
        = (X^{13} + X^{12} + X^{9} + X^{7} + X^{6} + X^{5} + X, 0) \cdot (g_1 g_2^2)^{-1} (X^5 + X^4 - 1 - X^{-5}, 0)^{-1} (g_1 g_2^2) \\
        = (X^{13} + X^{12} + X^{9} + X^{7} + X^{6} + X^{5} + X) + X^{-8} \cdot (-1) \cdot (X^5 + X^4 - 1 - X^{-5}).
    \end{multline*}
    It is therefore indeed in the $\Z[X^{\pm 2}]$-module generated by $g_1 g_2 g_1^{-1} g_2^{-1} = X^5 + X^4 - 1 - X^{-5}$ and $g_1^3 g_2^2 = X^{13} + X^{12} + X^{9} + X^{7} + X^{6} + X^{5} + X$.

    Intuitively, modulo the generator $g_1 g_2 g_1^{-1} g_2^{-1}$, one can permute letters in any word over $\mG$ (in the above example, $g_1^2 g_2 g_1 g_2$ is congruent to $g_1^3 g_2^2$).
    Whereas the generator $g_1^3 g_2^2$ guarantees the second entry of the product to be zero.
\end{example}

Given finite sets $\mG, \mH \subseteq G$ and $h \in G$, Coset Intersection asks to decide whether $\gen{\mG} \cap h \gen{\mH} = \emptyset$.
We split into three cases according to whether $\gen{\mG}$ and $\gen{\mH}$ are contained in the subgroup $\mA$.
If at least one of $\gen{\mG}$ and $\gen{\mH}$ is contained in $\mA$ (Case 1 and 2 below), then the solutions to Subgroup and Coset Intersection are relatively straightforward using the standard algorithms in Lemma~\ref{lem:classicdec} and \ref{lem:decinterZ}.
If neither $\gen{\mG}$ nor $\gen{\mH}$ is contained in the subgroup $\mA$ (Case 3 below), then the solution is more complicated and we need to invoke Theorem~\ref{thm:onemono}.
The detailed procedure is summarized in Algorithm~\ref{alg:cosetinter}.

\subsection*{Case 1: $\gen{\mG}$ and $\gen{\mH}$ are both contained in $\mA$}

Suppose $\gen{\mG}$ is generated by the elements $g_1 = (\ba_1, 0), \ldots, g_K = (\ba_K, 0)$, and $\gen{\mH}$ is generated by the elements $h_1 = (\ba'_1, 0), \ldots, h_M = (\ba'_M, 0)$.

In this case, we have
\[
\gen{\mG} = \{y_1 \cdot \ba_1 + \cdots + y_K \cdot \ba_K \mid y_1, \ldots, y_K \in \Z\}, \quad \gen{\mH} = \{z_1 \cdot \ba'_1 + \cdots + z_M \cdot \ba'_M \mid z_1, \ldots, z_M \in \Z\}.
\]

Let $h = (\ba_h, z_h)$.
If $z_h \neq 0$ then $\gen{\mG} \cap h \gen{\mH} = \emptyset$.
Therefore we only need to consider the case where $z_h = 0$.
Then $\gen{\mG} \cap h \gen{\mH} = \emptyset$ if and only if there is no solution for $y_1 \cdot \ba_1 + \cdots + y_K \cdot \ba_K = z_1 \cdot \ba'_1 + \cdots + z_M \cdot \ba'_M + z \cdot \ba_h, \; y_1, \ldots, y_K, z_1, \ldots, z_M \in \Z, z = 1$.

Let $\mM'$ denote the $\Z[X^{\pm}]$-module
\begin{multline}\label{eq:M1c}
\mM' \coloneqq \Big\{ (y_1, \ldots, y_K, z_1, \ldots, z_M, z) \in \Z[X^{\pm}]^{K+M+1} \;\Big|\; \\
y_1 \cdot \ba_1 + \cdots + y_K \cdot \ba_K - z_1 \cdot \ba'_1 - \cdots - z_M \cdot \ba'_M - z \cdot \ba_h = \bzer \Big\}.
\end{multline}

\begin{observation}
    We have $\gen{\mG} \cap h \gen{\mH} = \emptyset$ if and only if $\big(\mM' \cap \Z^{K+M+1}\big) \cap \big(\Z^{K+M} \times \{1\}\big) = \emptyset$.
\end{observation}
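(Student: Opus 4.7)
The plan is to prove this observation by a direct unraveling of definitions, since both sides are two ways of encoding the same $\Z$-linear equation in $\mA$. First, I would reduce to the case $z_h = 0$ (the case $z_h \neq 0$ being already disposed of just above), under which an element of $\gen{\mG} \cap h \gen{\mH}$ exists precisely when there are integers $y_1, \ldots, y_K, z_1, \ldots, z_M \in \Z$ satisfying
\[
y_1 \cdot \ba_1 + \cdots + y_K \cdot \ba_K \;=\; \ba_h + z_1 \cdot \ba'_1 + \cdots + z_M \cdot \ba'_M
\]
in $\mA$, which rearranges to $y_1 \cdot \ba_1 + \cdots + y_K \cdot \ba_K - z_1 \cdot \ba'_1 - \cdots - z_M \cdot \ba'_M - 1 \cdot \ba_h = \bzer$.

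Next, I would view each integer $y_i$ and $z_j$, as well as the constant $1$, as a constant Laurent polynomial in $\Z[X^{\pm}]$. The displayed equation then states exactly that the tuple $(y_1, \ldots, y_K, z_1, \ldots, z_M, 1)$ belongs to the $\Z[X^{\pm}]$-module $\mM'$ defined in~\eqref{eq:M1c}. Since every coordinate is an integer and the last coordinate equals $1$, this is in turn equivalent to the tuple lying in $\big(\mM' \cap \Z^{K+M+1}\big) \cap \big(\Z^{K+M} \times \{1\}\big)$. For the converse direction, any element of this intersection immediately supplies integer coefficients $y_i, z_j$ witnessing a common point of $\gen{\mG}$ and $h\gen{\mH}$.

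The argument is entirely bookkeeping, with no substantial obstacle. The only point deserving care is that the intersection with $\Z^{K+M+1}$ in the statement is precisely what enforces the coefficients $y_i, z_j$ to be integers rather than arbitrary Laurent polynomials, so after identifying $\Z$ with the constants in $\Z[X^{\pm}]$, the two formulations become literally equivalent.
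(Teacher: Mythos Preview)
Your proposal is correct and follows exactly the reasoning the paper sketches in the surrounding text; the paper does not give a separate proof of this Observation, but your unpacking matches the sentence immediately preceding it, which already identifies $\gen{\mG} \cap h\gen{\mH} = \emptyset$ with the nonexistence of an integer solution $(y_1,\ldots,y_K,z_1,\ldots,z_M,1)$ to the defining relation of $\mM'$.
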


The generators of $\mM'$ as a $\Z[X^{\pm}]$-module can be computed by Lemma~\ref{lem:classicdec}(ii).
Then, a $\Z$-basis for $\mM' \cap \Z^{K+M+1}$ can be computed by Lemma~\ref{lem:decinterZ}. Therefore, whether $\big(\mM' \cap \Z^{K+M+1}\big) \cap \big(\Z^{K+M} \times \{1\}\big) = \emptyset$ can be decided using linear algebra over $\Z$.

\subsection*{Case 2: one of $\gen{\mG}$ and $\gen{\mH}$ is contained in $\mA$}

We can without loss of generality suppose $\gen{\mH} \subseteq \mA$ and $\gen{\mG} \not\subset \mA$.
Otherwise notice that $\gen{\mG} \cap h \gen{\mH} = \emptyset$ if and only if $h^{-1} \gen{\mG} \cap \gen{\mH} = \emptyset$, so we can exchange the role of $\gen{\mG}$ and $\gen{\mH}$.

By Lemma~\ref{lem:struct}, suppose $\gen{\mG}$ is generated by the element $(\ba_{\mG}, d_{\mG})$ and the $\Z[X^{\pm d_{\mG}}]$-module $\gen{\mG} \cap \mA$, and $\gen{\mH}$ is generated by the elements $h_1 = (\ba'_1, 0), \ldots, h_M = (\ba'_M, 0)$.
The generators of the $\Z[X^{\pm d_{\mG}}]$-module $\gen{\mG} \cap \mA$ can be effectively computed by Lemma~\ref{lem:struct}.
Also, by Lemma~\ref{lem:changebase}, we can consider $\mA$ as a finitely presented $\Z[X^{\pm d_{\mG}}]$-module instead of a $\Z[X^{\pm}]$-module, and suppose the generators of $\gen{\mG} \cap \mA$ as well as $\ba'_1, \ldots, \ba'_M$ are given as elements of the $\Z[X^{\pm d_{\mG}}]$-module $\mA$.

Let $h = (\ba_h, z_h)$.
If $d_{\mG} \nmid z_h$ then $\gen{\mG} \cap h \gen{\mH} = \emptyset$.
Therefore we only need to consider the case where $z_h = z d_{\mG}$ for some $z \in \Z$.
Then $\gen{\mG} \cap h \gen{\mH} \neq \emptyset$ if and only if the equation
$
(\bb, 0) \cdot (\ba_{\mG}, d_{\mG})^z = (\ba_h, z_h) \cdot (\bc, 0)
$
has solutions $\bb \in \gen{\mG} \cap \mA,\; \bc \in \sum_{i = 1}^M \Z \cdot \ba'_i$.
Direct computation shows this is equivalent to
\[
    X^{z_h} \cdot \bc + \left(\ba_h - \frac{X^{z d_{\mG}} - 1}{X^{d_{\mG}} - 1} \cdot \ba_{\mG} \right) = \bb.
\]

Let $\mM'$ denote the $\Z[X^{\pm d_{\mG}}]$-module
\begin{multline}\label{eq:M2c}
\mM' \coloneqq \Bigg\{ (z_1, \ldots, z_M, z) \in \Z[X^{\pm d_{\mG}}]^{M+1} \;\Bigg|\; \\
X^{z_h} \cdot \left(z_1 \cdot \ba'_1 + \cdots + z_M \cdot \ba'_M\right) + z \cdot \left(\ba_h - \frac{X^{z_h} - 1}{X^{d_{\mG}} - 1} \cdot \ba_{\mG} \right) \in \gen{\mG} \cap \mA \Bigg\}. 
\end{multline}

\begin{observation}
We have $\gen{\mG} \cap h \gen{\mH} = \emptyset$ if and only if $d_{\mG} \nmid z_h$ and $\big(\mM' \cap \Z^{M+1}\big) \cap \big(\Z^{M} \times \{1\}\big) = \emptyset$.
\end{observation}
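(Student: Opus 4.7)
The plan is to use Lemma~\ref{lem:struct}(iii) to parametrise arbitrary elements of $\gen{\mG}$ and of $h\gen{\mH}$, match their second coordinates to fix the exponent on $(\ba_\mG, d_\mG)$, and then read the resulting first-coordinate equation as membership of the tuple $(z_1, \dots, z_M, 1)$ in $\mM' \cap \Z^{M+1}$.

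The case $d_\mG \nmid z_h$ is handled immediately by projecting onto $(\mA \rtimes \Z)/\mA \cong \Z$: every element of $\gen{\mG}$ lands in $d_\mG \Z$ while every element of $h\gen{\mH}$ lands at $z_h$, so the intersection is automatically empty. In the remaining case $z_h = z_0 d_\mG$, I would write any element of $\gen{\mG}$ as $(\bb, 0)\cdot(\ba_\mG, d_\mG)^m$ with $\bb \in \gen{\mG} \cap \mA$ and $m \in \Z$, and any element of $h\gen{\mH}$ as $(\ba_h, z_h)\cdot(\bc, 0)$ with $\bc = \sum_{i=1}^M z_i \ba'_i$, $z_i \in \Z$. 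Matching second coordinates forces $m = z_0$, and a short induction using the semidirect product law yields
\[
(\ba_\mG, d_\mG)^{z_0} = \left(\tfrac{X^{z_h} - 1}{X^{d_\mG} - 1}\,\ba_\mG,\; z_h\right).
\]
Comparing first coordinates then rewrites the intersection condition as the existence of $\bb \in \gen{\mG} \cap \mA$ and integers $z_1,\dots,z_M \in \Z$ satisfying
\[
\bb \;=\; X^{z_h}\bigl(z_1 \ba'_1 + \cdots + z_M \ba'_M\bigr) + \Bigl(\ba_h - \tfrac{X^{z_h} - 1}{X^{d_\mG} - 1}\,\ba_\mG\Bigr),
\]
which by the very definition of $\mM'$ is equivalent to $(z_1, \dots, z_M, 1) \in \mM' \cap \Z^{M+1}$, i.e., to $\bigl(\mM' \cap \Z^{M+1}\bigr) \cap \bigl(\Z^{M} \times \{1\}\bigr) \neq \emptyset$.

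The only delicate point, which I would flag explicitly, is the mismatch of scalars: $\gen{\mH}$ is the $\Z$-module generated by the $\ba'_i$, whereas $\mM'$ is defined as a module over $\Z[X^{\pm d_\mG}]$. This is exactly why the final condition asks for a tuple in $\mM' \cap \Z^{M+1}$ with last entry equal to $1$, rather than mere $\Z[X^{\pm d_\mG}]$-membership in $\mM'$. Since Lemma~\ref{lem:decinterZ} computes a $\Z$-basis of $\mM' \cap \Z^{M+1}$ from the generators of $\mM'$ produced by Lemma~\ref{lem:classicdec}, the emptiness of the intersection with $\Z^{M} \times \{1\}$ is then decidable by linear algebra over $\Z$, completing Case~2.
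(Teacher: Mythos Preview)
Your argument is correct and matches the paper's own derivation (the paragraph immediately preceding the Observation): parametrise $\gen{\mG}$ via Lemma~\ref{lem:struct}(iii), match second coordinates to force $m d_\mG = z_h$, expand $(\ba_\mG,d_\mG)^{z_0}$ via the geometric-series identity, and read off the first-coordinate condition as membership of $(z_1,\ldots,z_M,1)$ in $\mM'\cap\Z^{M+1}$. One remark: the Observation as printed has ``and'' where the logic (and your proof) gives ``or'' --- the intersection is empty iff \emph{either} $d_\mG\nmid z_h$ \emph{or} $\big(\mM'\cap\Z^{M+1}\big)\cap\big(\Z^M\times\{1\}\big)=\emptyset$; your proof establishes exactly this corrected version, consistent with Algorithm~\ref{alg:cosetinter}, Step~2.
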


The generators of $\mM'$ as a $\Z[X^{\pm d_{\mG}}]$-module can be computed by Lemma~\ref{lem:classicdec}(ii), applied over the quotient module $\mA / \left(\gen{\mG} \cap \mA \right)$.
Then, a $\Z$-basis for $\mM' \cap \Z^{M+1}$ can be computed by Lemma~\ref{lem:decinterZ}. 
Therefore, whether $\big(\mM' \cap \Z^{M+1}\big) \cap \big(\Z^{M} \times \{1\}\big) = \emptyset$ can be decided using linear algebra over $\Z$.

\subsection*{Case 3: neither $\gen{\mG}$ nor $\gen{\mH}$ is contained in $\mA$}

By Lemma~\ref{lem:struct}, suppose $\gen{\mG}$ is generated by $\gen{\mG} \cap \mA$ and an element $(\ba_{\mG}, d_{\mG})$; and suppose $\gen{\mH}$ is generated by $\gen{\mH} \cap \mA$ and an element $(\ba_{\mH}, d_{\mH})$.
The elements $(\ba_{\mG}, d_{\mG})$ and $(\ba_{\mH}, d_{\mH})$ can be effectively computed from the generating sets $\mG, \mH$ by performing the Euclidean algorithm.
Furthermore, $\gen{\mG} \cap \mA$ is a $\Z[X^{\pm d_{\mG}}]$-module whose generators are explicitly given (by Equation~\eqref{eq:genmod}), and $\gen{\mH} \cap \mA$ is a $\Z[X^{\pm d_{\mH}}]$-module whose generators are explicitly given.

Let $h = (\ba_h, z_h)$.
Then $\gen{\mG} \cap h \gen{\mH} = \emptyset$ if and only if the equation
$
    (\bb, 0) \cdot (\ba_{\mG}, d_{\mG})^m = (\ba_h, z_h) \cdot (\bc, 0) \cdot (\ba_{\mH}, d_{\mH})^n
$
has solutions $\bb \in \gen{\mG} \cap \mA, \bc \in \gen{\mH} \cap \mA$, and $m, n \in \Z$.
By direct computation, this is equivalent to the system
\begin{equation}\label{eq:Cosetsys}
    \bb + \frac{X^{m d_{\mG}} - 1}{X^{d_{\mG}} - 1} \cdot \ba_{\mG} = \ba_h + X^{z_h} \cdot \bc + X^{z_h} \cdot \frac{X^{n d_{\mH}} - 1}{X^{d_{\mH}} - 1} \cdot \ba_{\mH}, \quad m d_{\mG} = n d_{\mH} + z_h.
\end{equation}

We define $d \coloneqq \lcm(d_{\mG}, d_{\mH})$ to be the least common multiplier of $d_{\mG}$ and $d_{\mH}$, and consider both $\gen{\mG} \cap \mA$ and $\gen{\mH} \cap \mA$ as $\Z[X^{\pm d}]$-modules, respectively generated by the sets $S_{\mG}$ and $S_{\mH}$.
The sets $S_{\mG}$ and $S_{\mH}$ can be effectively computed by Lemma~\ref{lem:changebase} from the generators of $\gen{\mG} \cap \mA$ as a $\Z[X^{\pm d_{\mG}}]$-module, and from the generators of $\gen{\mH} \cap \mA$ is a $\Z[X^{\pm d_{\mH}}]$-module.
We define the $\Z[X^{\pm d}]$-module 
\[
    \mM' \coloneqq (\gen{\mG} \cap \mA) + X^{z_h} \cdot (\gen{\mH} \cap \mA),
\]
which is generated by the set $S_{\mG} \cup \{X^{z_h} \cdot s \mid s \in S_{\mH}\}$.

If $m d_{\mG} = n d_{\mH} + z_h$ has no integer solutions $m, n$, then $\gen{\mG} \cap h \gen{\mH} = \emptyset$.
Otherwise, there exist $z_{\mG} \coloneqq m d_{\mG}, z_{\mH} \coloneqq n d_{\mH} \in \Z$ such that $d_{\mG} \mid z_{\mG},\; d_{\mH} \mid z_{\mH}$ and $z_{\mG} = z_{\mH} + z_h$.
Then, every solution $(m, n) \in \Z^2$ of the equation $m d_{\mG} = n d_{\mH} + z_h$ is of the form
\[
m = (z_{\mG} + z d)/d_{\mG}, \quad n = (z_{\mH} + z d)/d_{\mH}, \quad z \in \Z.
\]

\begin{restatable}{lemma}{lemcosettoeq}\label{lem:cosettoeq}
    Let $z_{\mG}, z_{\mH}$ be integers such that $d_{\mG} \mid z_{\mG},\; d_{\mH} \mid z_{\mH}$ and $z_{\mG} = z_{\mH} + z_h$.
    The intersection $\gen{\mG} \cap h \gen{\mH}$ is non-empty if and only if the equation
        \begin{equation}\label{eq:Coseteqz}
            X^{z d} \cdot \ba'_{\mG, \mH} - \ba''_{\mG, \mH} \in (X^d - 1) \cdot \mM'
        \end{equation}
        has solution $z \in \Z$.
        Here,
        \[
        \ba'_{\mG, \mH} \coloneqq X^{z_{\mG}} \cdot \left(\frac{X^d - 1}{X^{d_{\mG}} - 1} \cdot \ba_{\mG} - \frac{X^d - 1}{X^{d_{\mH}} - 1} \cdot \ba_{\mH}\right),
        \]
        \[
        \ba''_{\mG, \mH} \coloneqq \frac{X^d - 1}{X^{d_{\mG}} - 1} \cdot \ba_{\mG} - X^{z_h} \cdot \frac{X^d - 1}{X^{d_{\mH}} - 1} \cdot \ba_{\mH} + (X^d - 1) \cdot \ba_h.
        \]
\end{restatable}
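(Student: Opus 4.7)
The plan is to reduce system~\eqref{eq:Cosetsys} (together with the side condition $md_{\mG} = nd_{\mH} + z_h$) to the single divisibility condition~\eqref{eq:Coseteqz} by an explicit algebraic manipulation, and then identify the range of the ``free'' parameters $\bb, \bc$. First, I would substitute the parametrization $md_{\mG} = z_{\mG} + zd$ and $nd_{\mH} = z_{\mH} + zd$ into the first equation of system~\eqref{eq:Cosetsys}, and apply the polynomial identity $X^{\ell + zd} - 1 = X^{\ell}(X^{zd} - 1) + (X^{\ell} - 1)$ to split each of $\frac{X^{md_{\mG}} - 1}{X^{d_{\mG}} - 1}$ and $\frac{X^{nd_{\mH}} - 1}{X^{d_{\mH}} - 1}$ into a $z$-independent piece and a $(X^{zd} - 1)$-multiple piece. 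Rearranging so that $\bb - X^{z_h}\bc$ is isolated on one side will give an equation $\bb - X^{z_h}\bc = R(z)$ for an explicit element $R(z) \in \mA$.

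By the very definition of $\mM'$, the difference $\bb - X^{z_h}\bc$ ranges exactly over $\mM'$ as $\bb$ varies in $\gen{\mG} \cap \mA$ and $\bc$ in $\gen{\mH} \cap \mA$. Hence $\gen{\mG} \cap h\gen{\mH}$ is non-empty if and only if there exists $z \in \Z$ with $R(z) \in \mM'$. To bring this into the form of Equation~\eqref{eq:Coseteqz}, I would multiply through by $X^d - 1$, using the factorization $X^{zd} - 1 = (X^d - 1)\cdot\frac{X^{zd} - 1}{X^d - 1}$ with $\frac{X^{zd} - 1}{X^d - 1} \in \Z[X^{\pm d}]$. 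A careful direct computation, in which the identity $z_{\mG} = z_{\mH} + z_h$ (equivalently $X^{z_h} X^{z_{\mH}} = X^{z_{\mG}}$) is used to gather the constant contributions, should yield the closed-form identity
\[
(X^d - 1)\cdot R(z) \;=\; \ba''_{\mG,\mH} \;-\; X^{zd}\cdot \ba'_{\mG,\mH}.
\]

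From this identity, the forward direction of the equivalence is immediate: if $R(z) \in \mM'$, then $X^{zd}\ba'_{\mG,\mH} - \ba''_{\mG,\mH} = -(X^d - 1)R(z) \in (X^d - 1)\mM'$. The main obstacle is the reverse direction: from the existence of some $\bd \in \mM'$ with $(X^d - 1)\bd = X^{zd}\ba'_{\mG,\mH} - \ba''_{\mG,\mH}$ we only immediately conclude $(X^d - 1)(\bd + R(z)) = \bzer$ in $\mA$, and so to deduce $R(z) \in \mM'$ one must handle the possible $(X^d - 1)$-torsion of $\mA$. When $\mA$ is $(X^d - 1)$-torsion-free, division by $X^d - 1$ is legal and the reverse direction is immediate; for the general case I expect to close this gap either by showing that the relevant torsion is absorbed into $\mM'$, or by a small adjustment of $z$ and $\bd$ that kills the torsion discrepancy, which is the step I expect to require the most care.
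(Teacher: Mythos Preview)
Your plan coincides with the paper's proof: the same identity $(X^d-1)\cdot R(z)=\ba''_{\mG,\mH}-X^{zd}\cdot\ba'_{\mG,\mH}$ is obtained by exactly the direct computation you describe (splitting $X^{z_{\mG}+zd}-1$ and $X^{z_{\mH}+zd}-1$, then using $z_{\mG}=z_{\mH}+z_h$), and both directions are argued the same way. For the converse, the paper simply writes $\frac{X^{zd}\cdot\ba'_{\mG,\mH}-\ba''_{\mG,\mH}}{X^d-1}\in\mM'$, treating the fraction as the explicit element $-R(z)$, and does not separately address the $(X^d-1)$-torsion concern you flag; so the ``extra care'' you anticipate at the end is not supplied by the paper's argument either.
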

\begin{proof}
    Suppose the $\gen{\mG} \cap h \gen{\mH}$ is non-empty. Let $(\ba, z') \in \gen{\mG} \cap h \gen{\mH}$, then $d_{\mG} \mid z'$ and $d_{\mH} \mid (z' - z_h)$.
    Hence $z' = z_{\mG} + zd = z_{\mH} + zd + z_h$ for some $z \in \Z$.
    Since $(\ba, z') \in \gen{\mG} \cap h \gen{\mH}$, the Equation~\eqref{eq:Cosetsys} has solution with $m d_{\mG} = z' = z_{\mG} + zd, \; n d_{\mH} = z' - z_h = z_{\mH} + zd$, meaning
    \begin{multline*}
        \frac{X^{z d} \cdot \ba'_{\mG, \mH} - \ba''_{\mG, \mH}}{X^d - 1} =
        \frac{X^{z_{\mG} + zd}}{X^{d_{\mG}} - 1} \cdot \ba_{\mG} - \frac{X^{z_{\mG} + zd}}{X^{d_{\mH}} - 1} \cdot \ba_{\mH} - \frac{1}{X^{d_{\mG}} - 1} \cdot \ba_{\mG} + \frac{X^{z_h}}{X^{d_{\mH}} - 1} \cdot \ba_{\mH} - \ba_h \\
        = \frac{X^{z_{\mG} + zd} - 1}{X^{d_{\mG}} - 1} \cdot \ba_{\mG} - \frac{X^{z_{\mG} + zd} - X^{z_h}}{X^{d_{\mH}} - 1} \cdot \ba_{\mH} - \ba_h \\
        = \frac{X^{z_{\mG} + zd} - 1}{X^{d_{\mG}} - 1} \cdot \ba_{\mG} - X^{z_h} \cdot \frac{X^{z_{\mH} + zd} - 1}{X^{d_{\mH}} - 1} \cdot \ba_{\mH} - \ba_h = X^{z_h} \cdot \bc - \bb \in \mM'.
    \end{multline*}
    Therefore \eqref{eq:Coseteqz} is satisfied.

    Conversely, suppose Equation~\eqref{eq:Coseteqz} is satisfied.
    Then we have
    \[
    \frac{X^{z_{\mG} + zd} - 1}{X^{d_{\mG}} - 1} \cdot \ba_{\mG} - X^{z_h} \cdot \frac{X^{z_{\mH} + zd} - 1}{X^{d_{\mH}} - 1} \cdot \ba_{\mH} - \ba_h = \frac{X^{z d} \cdot \ba'_{\mG, \mH} - \ba''_{\mG, \mH}}{X^d - 1} \in \mM',
    \]
    so it can be written as $X^{z_h} \cdot \bc - \bb$ for some $\bb \in \gen{\mG} \cap \mA, \; \bc \in \gen{\mH} \cap \mA$.
    Hence the system~\eqref{eq:Cosetsys} has solutions $\bb \in \gen{\mG} \cap \mA,\; \bc \in \gen{\mH} \cap \mA,\; m = \frac{z_{\mG} + z d}{d_{\mG}},\; n = \frac{z_{\mH} + z d}{d_{\mH}}$.
\end{proof}


By Lemma~\ref{lem:cosettoeq}, it suffices to decide whether Equation~\eqref{eq:Coseteqz} has a solution $z \in \Z$.
To do this, consider both $\mA$ and $(X^d - 1) \cdot \mM'$ as $\Z[X^{\pm d}]$-modules.
Equation~\eqref{eq:Coseteqz} is equivalent to the equation
\begin{equation}\label{eq:inquot}
\left(X^{d}\right)^z \cdot \ba'_{\mG, \mH} = \ba''_{\mG, \mH}
\end{equation}
in the quotient module $\mA / \left((X^d - 1) \cdot \mM'\right)$.
Applying Theorem~\ref{thm:onemono} to the variable $X^d$ (instead of $X$) and the finitely presented $\Z[X^{\pm d}]$-module $\mA / \left((X^d - 1) \cdot \mM'\right)$, we can decide whether the above equation has a solution $z \in \Z$.

Combining the three cases, we have proven Theorem~\ref{thm:coset}.

\begin{algorithm}[!h]
\caption{Algorithm for Coset Intersection}
\label{alg:cosetinter}
\begin{description} 
\item[Input:]
a finite presentation of the $\Z[X^{\pm}]$-module $\mA$, two finite sets of elements $\mG = \{(\ba_1, z_1), \ldots, (\ba_K, z_K)\},\; \mH = \{(\ba'_1, z'_1), \ldots, (\ba'_M, z'_M)\}$ in the group $\mA \rtimes \Z$, an element $h = (\ba_h, z_h)$.
\item[Output:] \textbf{True} (when $\gen{\mG} \cap h \gen{\mH} = \emptyset$) or \textbf{False} (when $\gen{\mG} \cap h \gen{\mH} \neq \emptyset$).
\end{description}
\begin{enumerate}[nosep, label=\arabic*.]
    \item \textbf{If $z_1, \ldots, z_K$ and $z'_1, \ldots, z'_M$ are all zero.}
    
    Compute generators of the module $\mM'$ defined in Equation~\eqref{eq:M1c}.
    
    Decide whether 
    $
    \big(\mM' \cap \Z^{K+M+1}\big) \cap \big(\Z^{K+M} \times \{1\}\big) = \emptyset
    $
    using Lemma~\ref{lem:classicdec} and \ref{lem:decinterZ}.
    If yes, return \textbf{True}, otherwise return \textbf{False}.
    
    \item \textbf{If one of the sets $\{z_1, \ldots, z_K\}$ and $\{z'_1, \ldots, z'_M\}$ is all zero.}
    
    Without loss of generality suppose $z'_1 = \cdots = z'_M = 0$, otherwise swap the sets $\mG, \mH$ and replace $h$ with $h^{-1}$.
    \begin{enumerate}[nosep, label = (\roman*)]
        \item Compute $d_{\mG} \coloneqq \gcd(z_1, \ldots, z_K)$, and compute generators of the $\Z[X^{\pm d_{\mG}}]$-module $\gen{\mG} \cap \mA$ using Lemma~\ref{lem:struct}.
        \item If $d_{\mG} \mid z_h$, continue, otherwise return \textbf{True}.
        \item Compute generators of the modules $\mM'$ defined in Equation~\eqref{eq:M2c}.
        
        Decide whether 
        $
        \big(\mM' \cap \Z^{M+1}\big) \cap \big(\Z^{M} \times \{1\}\big) = \emptyset
        $
        using Lemma~\ref{lem:classicdec} and \ref{lem:decinterZ}.
        If yes, return \textbf{True}, otherwise return \textbf{False}.
    \end{enumerate}
    \item \textbf{If none of the sets $\{z_1, \ldots, z_K\}$ and $\{z'_1, \ldots, z'_M\}$ is all zero.}
    \begin{enumerate}[nosep, label = (\roman*)]
        \item Compute $d_{\mG} \coloneqq \gcd(z_1, \ldots, z_K), d_{\mH} \coloneqq \gcd(z'_1, \ldots, z'_M), d \coloneqq \gcd(d_{\mG}, d_{\mH})$.
        \item Decide whether the equation $m d_{\mG} = n d_{\mH} + z_h$ has solutions $(m, n) \in \Z^2$.
        If yes, take any solution $(m, n)$ and let $z_{\mG} \coloneqq m d_{\mG},\; z_{\mH} \coloneqq n d_{\mH}$; otherwise return \textbf{True}.
        \item Compute the generators of the $\Z[X^{\pm d_{\mG}}]$-module $\gen{\mG} \cap \mA$ and the generators of the $\Z[X^{\pm d_{\mH}}]$-module $\gen{\mH} \cap \mA$ using Lemma~\ref{lem:struct}.
        Compute their respective generators $S_{\mG}, S_{\mH}$ as $\Z[X^{\pm d}]$-modules using Lemma~\ref{lem:changebase}.
        Let $\mM'$ be the $\Z[X^{\pm d}]$-module generated by $S_{\mG} \cup \{X^{z_h} \cdot s \mid s \in S_{\mH}\}$.
        \item 
        Decide whether the Equation~\eqref{eq:inquot} over $\mA/\left((X^d - 1) \cdot \mM'\right)$ has a solution $z \in \Z$, using Theorem~\ref{thm:onemono}.
        If yes, return \textbf{False}, otherwise return \textbf{True}.
    \end{enumerate}
\end{enumerate}
\end{algorithm}

\newpage

\bibliography{intermeta}

\appendix
\section{Omitted proofs}\label{app:proof}

\thmonemono*
\begin{proof}
We give a deduction of Theorem~\ref{thm:onemono} from Noskov's Lemma:
\begin{lemma}[{Noskov's Lemma~\cite{noskov1982conjugacy}, see also~\cite[Proposition~2.4]{baumslag1994algorithmic}}]
    There is an algorithm which, given a finitely generated commutative ring $R$ and a finite subset $S$ of the group of units $U(R)$, finds a finite presentation of the multiplicative subgroup $\langle S \rangle$.
\end{lemma}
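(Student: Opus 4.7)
The plan is to exploit the structure of finitely generated commutative rings so that the presentation problem for $\langle S \rangle$ reduces to well-understood effective results about units in integral domains. First I would present $R$ as $\Z[y_1, \ldots, y_n]/I$ with a computable Gröbner basis for $I$, so that equality in $R$ is decidable. Since $R$ is commutative, $\langle S \rangle$ is a finitely generated abelian group; a finite presentation is given by a generating set of the kernel $L \subseteq \Z^k$ of the surjection $\phi \colon \Z^k \twoheadrightarrow \langle S \rangle$ defined by $\phi(a_1, \ldots, a_k) = s_1^{a_1} \cdots s_k^{a_k}$, where $S = \{s_1, \ldots, s_k\}$. The task is therefore to compute finitely many generators of this lattice $L$.

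Next I would reduce to integral-domain base rings in two stages. Compute the nilradical $N = \sqrt{I}/I$ via Gröbner techniques, giving the short exact sequence $1 \to 1 + N \to U(R) \to U(R/\sqrt{I}) \to 1$. The subgroup $1 + N$ is a finitely generated nilpotent multiplicative group, filtered by $1 + N^t$, whose quotients are finitely generated abelian groups; relations of a finite subset inside $1 + N$ can be computed by working modulo each $N^t$ and taking successive Smith normal forms. For the reduced quotient $R/\sqrt{I}$, compute the minimal primes $P_1, \ldots, P_m$ (primary decomposition, again effective), yielding an embedding $R/\sqrt{I} \hookrightarrow \prod_j R/P_j$ into finitely generated integral domains. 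A relation in $R/\sqrt{I}$ holds iff it holds in each factor, so the lattice $L$ for $R$ is assembled from the nilpotent-part kernel and the kernels computed in each $R/P_j$ by standard lattice intersection in $\Z^k$.

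It remains to compute, for a finitely generated integral domain $D$, the relation lattice of a finite tuple of units. Here I would invoke the effective structure theory of $U(D)$: in characteristic zero, $D$ embeds into an $S$-integer ring of a number field (after clearing denominators and localising at the image of $S$), and an effective form of the Dirichlet/Chevalley $S$-unit theorem supplies computable generators of the ambient unit group together with effective logarithmic-embedding bounds. Writing each $s_i$ in terms of these generators reduces finding relations $\prod s_i^{a_i} = 1$ to solving a homogeneous system of linear equations over $\Z$, whose solution lattice is computable by linear algebra. The positive-characteristic case is handled in the same spirit via the function-field analogue.

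The main obstacle is the effectivity of the unit-group computation for arbitrary finitely generated integral domains in Stage 3: the existence of generators for $U(D)$ goes back to Roquette, but turning this into an algorithm requires explicit height and discriminant bounds and the effective reduction of $D$ to an $S$-integer ring. This is precisely the technical content carried out in the references cited, and absorbing those effective bounds is the step where all the real work sits; the other stages are essentially Gröbner-basis bookkeeping and integer linear algebra.
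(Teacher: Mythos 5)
You should first note that the paper does not prove this statement at all: it is quoted verbatim as a known result of Noskov (see also Baumslag--Cannonito--Miller, Proposition~2.4) and used as a black box in the deduction of Theorem~\ref{thm:onemono}. So there is no in-paper argument to compare against; what can be assessed is whether your sketch would actually constitute a proof, and as written it would not, for two concrete reasons.

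First, in the nilradical stage you assert that the quotients $(1+N^t)/(1+N^{t+1})\cong N^t/N^{t+1}$ are finitely generated abelian groups, to be handled by Smith normal form. This is false in general: for $R=\Z[x,y]/(y^2)$ the quotient $N/N^2\cong\Z[x]$ is not finitely generated as a group. These quotients are only finitely generated as modules over $R/N$, so the relation lattice must be extracted with effective module-theoretic tools (syzygies and intersection of a submodule with $\Z^k$, in the spirit of Lemmas~\ref{lem:classicdec} and~\ref{lem:decinterZ}), not integer linear algebra alone; this is repairable but not what you wrote. Second, and more seriously, your reduction of the integral-domain case is wrong: a finitely generated characteristic-zero domain of positive transcendence degree, such as $\Z[x,1/x]$ (whose unit group is $\{\pm x^n\}$), cannot be embedded into a ring of $S$-integers of a number field, since every element of such a ring is algebraic over $\Q$. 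The correct statement behind this step is Roquette's theorem that $U(D)$ is finitely generated for \emph{every} finitely generated domain $D$, proved by induction on transcendence degree via valuation-theoretic specialization, and making that induction effective is precisely the content of Noskov's lemma. Your sketch therefore defers the entire substance of the result to ``effective bounds in the references'' while routing through a reduction that does not exist; the surrounding architecture (relation lattice, nilradical, minimal primes) is the standard and correct frame, but the core is missing.
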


Consider the ideal
\begin{equation}\label{eq:defI}
\mI \coloneqq \left\{f \in \Z[X^{\pm}] \;\middle|\; f \cdot \bff_1 = \bzer \right\}
\end{equation}
of $\Z[X^{\pm}]$, then a finite set of generators for $\mI$ can be computed by Lemma~\ref{lem:classicdec}(ii).

Using Lemma~\ref{lem:classicdec}(i), decide whether there exists $h_0 \in \Z[X^{\pm}]$ such that $h_0 \cdot \bff_1 = \bff_0$.
If such $h_0$ does not exist, then $X^z \cdot \bff_1 = \bff_0$ has no solution $z \in \Z$.
Otherwise, if such an $h_0$ exists, it can be found by enumerating through $\Z[X^{\pm}]$.
The solution set
\[
    \left\{f \in \Z[X^{\pm d}] \;\middle|\; f \cdot \bff_1 = \bff_0 \right\}
\]
is equal to $h_0 + \mI \coloneqq \{h_0 + f \mid f \in \mI\}$.
Consider the ideal $\mJ$ of $\Z[X^{\pm}, Y^{\pm}]$ generated by $\mI$ and the element $Y - h_0$.
Then $X^z \in h_0 + \mI$ if and only if $X^z - Y \in \mJ$, which is equivalent to the equation $X^z = Y$ in the quotient ring $\Z[X^{\pm}, Y^{\pm}]/\mJ$.
We then apply Noskov's Lemma on the quotient ring $R = \Z[X^{\pm}, Y^{\pm}]/\mJ$ and the subset $S = \{X, Y\}$ of its group of units.
This determines whether there exists $z \in \Z$ such that $X^z = Y$ in the quotient ring $\Z[X^{\pm}, Y^{\pm}]/\mJ$.
\end{proof}

\lemchangebase*
\begin{proof}
    Write $\mA = M/N$ where $M, N \subseteq \Z[X^{\pm}]^D$.
    Every element $\bg \in \Z[X^{\pm}]^D$ can be uniquely written as $\bg = \bg_0 + X \cdot \bg_1 + \cdots + X^{d-1} \cdot \bg_{d-1}$ where $\bg_0, \bg_1, \ldots, \bg_{d-1}$ are in $\Z[X^{\pm d}]^D$.
    This gives an effective isomorphism $\varphi \colon \Z[X^{\pm}]^D \rightarrow \Z[X^{\pm d}]^{Dd}$.
    The generators of $\varphi(M)$ can be obtained by simply applying $\varphi$ to the generators of $M$, similarly for $\varphi(N)$.
    Hence $\mA = \varphi(M)/\varphi(N)$ is a finite presentation of $\mA$ as a $\Z[X^{\pm d}]$-module.
\end{proof}

\lemstruct*
\begin{proof}
    (i) is obvious.
    For (ii), suppose $z_1, \ldots, z_K$ are not all zero and let $d \in \N$ be their greatest common divisor.
    Let $n_1, \ldots, n_K \in \Z$ be such that $n_1 z_1 + \cdots + n_K z_K = d$, then $g \coloneqq g_1^{n_1} \cdots g_K^{n_K}$ is of the form $(\bb, d), \bb \in \mA$.
    Then for any $\ba \in \gen{\mG} \cap \mA$, we have $\gen{\mG} \cap \mA \ni g^{-1} \ba g = (- X^{-d} \cdot \bb, -d) (\ba, 0) (\bb, d) = (X^d \cdot \ba, 0) = X^d \cdot \ba$.
    Similarly, $\gen{\mG} \cap \mA \ni g \ba g^{-1} = X^{-d} \cdot \ba$.
    Therefore, $\gen{\mG} \cap \mA$ is a $\Z[X^{\pm d}]$-module.

    On one hand, the elements in $S$ are obviously in $\gen{\mG} \cap \mA$, so the $\Z[X^{\pm d}]$-module generated by $S$ is a submodule of $\gen{\mG} \cap \mA$.
    On the other hand, we show that the quotient $(\gen{\mG} \cap \mA)/S$ is trivial.
    Notice that since $(g_i g_j g_i^{-1} g_j^{-1})^{-1} = g_j g_i g_j^{-1} g_i^{-1}$, we have $\left\{g_i g_j g_i^{-1} g_j^{-1} \;\middle|\; i, j \in [1, K] \right\} \subseteq S$.
    Therefore, the quotient by $S$ allows one to permute elements in any product $g_{i_1}^{\epsilon_1} \cdots g_{i_n}^{\epsilon_n} \in \gen{\mG} \cap \mA$ without changing their class in $(\gen{\mG} \cap \mA)/S$.
    More precisely, for $g, g' \in \gen{\mG}$ and $i, j \in [1, K]$, if $g g_i g_j g' \in \mA$ then we have $g g_i g_j g' + S = g g_j g_i g' + S$.
    Indeed, we have $g g_i g_j g' = g (g_i g_j g_i^{-1} g_j^{-1}) g_j g_i g' = g (g_i g_j g_i^{-1} g_j^{-1}) g^{-1} + g g_j g_i g'$ and $g (g_i g_j g_i^{-1} g_j^{-1}) g^{-1}$ is in the $\Z[X^{\pm d}]$-module generated by $S$.
    For every product $g_{i_1}^{\epsilon_1} \cdots g_{i_n}^{\epsilon_n} \in \gen{\mG} \cap \mA$, we must have 
    \[
    \left(\sum_{j \in [1, n], i_j = 1} \epsilon_j, \ldots, \sum_{j \in [1, n], i_j = K} \epsilon_j\right) \in \Lambda
    \]
    by looking at the second component.
    Since $(s_{11}, \ldots, s_{1K}), \ldots, (s_{T1}, \ldots, s_{TK})$ be are the generators for $\Lambda$, by permuting the elements in the product we can rewrite $g_{i_1}^{\epsilon_1} \cdots g_{i_n}^{\epsilon_n}$ as 
    \[
    \left(g_1^{s_{1 1}} \cdots g_K^{s_{1 K}}\right)^{j_1} \cdots \left(g_1^{s_{T 1}} \cdots g_K^{s_{T K}}\right)^{j_T},
    \]
    where $j_1, \ldots, j_T \in \Z$.
    Therefore, $g_{i_1}^{\epsilon_1} \cdots g_{i_n}^{\epsilon_n}$ is in the $\Z[X^{\pm d}]$-module generated by $S$.

    For (iii), $\ba \in \mA$ be any element such that $(\ba, d) \in \gen{\mG}$.
    Since $d \in \N$ is the greatest common divisor for $z_1, \ldots, z_K$, every element $g$ of $\gen{\mG}$ must be of the form $(\bc, md), \; \bc \in \mA, m \in \Z$.
    Then $g \cdot (\ba, d)^{-m} \in \gen{\mG} \cap \mA$. Let $(\bb, 0) \coloneqq g \cdot (\ba, d)^{-m}$, then $\ba \in \gen{\mG} \cap \mA$ and $g = (\bb, 0) \cdot (\ba, d)^m$.
\end{proof}

\end{document}